\numberwithin{equation}{section}
\DeclareMathOperator{\E}{\mathbb{E}}
\DeclareMathOperator{\sign}{sign}
\newcommand{\ip}[2]{\langle#1,#2\rangle}
\def \R {\mathbb{R}}
\def \NN {\mathcal{N}}
\def \d {\delta}
\def \tran {\mathsf{T}}
\def \< {\langle}
\def \> {\rangle}
\newtheorem{theorem}{Theorem}[section]
\newtheorem{proposition}[theorem]{Proposition}
\newtheorem{corollary}[theorem]{Corollary}
\newtheorem{lemma}[theorem]{Lemma}
\newtheorem{definition}[theorem]{Definition}
\theoremstyle{remark}
\newtheorem{remark}[theorem]{Remark}
\newcommand{\vect}[1]{\bm{#1}}
\newcommand{\mat}[1]{\bm{#1}}
\def \g {\vect{g}}
\def \x {\vect{x}}
\def \a {\vect{a}}
\def \z {\vect{z}}
\def \h {\vect{h}}
\def \u {\vect{u}}
\def \v {\vect{v}}
\def \xhat {\widehat{\vect{x}}}
\def \y {\vect{y}}
\def \A {\mat{A}}
\def \G {\mat{G}}
\def \Id {\mat{I}}
\def \P {\mat{P}}
\title[]{The generalized Lasso with non-linear observations}
\author{Yaniv Plan \and Roman Vershynin}
\address{Y.~Plan is with the Department of Mathematics, University of British Columbia 1984 Mathematics Rd., Vancouver, BC V6T 1Z2, Canada}
\email{yaniv@math.ubc.ca}
\address{R.~Vershynin is with the Department of Mathematics, University of Michigan, 530 Church St., Ann Arbor, MI 48109, U.S.A.}
\email{romanv@umich.edu}
\thanks{R. V. is partially supported by NSF grant 1265782 and U.S. Air Force grant FA9550-14-1-0009.}
\date{\today}
\begin{document}

\begin{abstract}  
We study the problem of signal estimation from non-linear observations when the signal belongs to a low-dimensional set buried in a high-dimensional space.  
A rough heuristic often used in practice postulates that \textit{non-linear} observations may 
be treated as {\em noisy linear} observations, and thus the signal may be estimated using the generalized Lasso.  This is appealing because of the abundance of efficient, specialized solvers for this program.  
Just as noise may be diminished by projecting onto the lower dimensional space, the error from modeling non-linear observations with linear observations will be greatly reduced when using the signal structure in the reconstruction.  We allow general signal structure, only assuming that the signal belongs to some  set $K \subset \R^n$.  We consider the single-index model of non-linearity.  Our theory allows the non-linearity to be discontinuous, not one-to-one
and even unknown. 
We assume a random Gaussian model for the measurement matrix, but allow the rows to have an unknown covariance matrix.  As special cases of our results, we recover near-optimal theory for noisy linear observations, and also give the first theoretical accuracy guarantee for 1-bit compressed sensing with unknown covariance matrix of the measurement vectors.  
\end{abstract}

\maketitle

\section{Introduction}

Before describing to the non-linear setting which is the main theme of this paper, let us first consider the structured linear model
\[ \y = \A \x + \z\]
where an unknown vector $\x$ belongs to some known set $K \subset \R^n$.  
The goal is to reconstruct the signal $\x$ from the noisy measurement vector $\y \in \R^m$.  
A common method is to minimize the $\ell_2$ loss subject to a structural constraint:
\begin{equation}         \label{eq: estimation}
\text{minimize } \|\A \x' - \y\|_2 \quad \text{ subject to }\quad \x' \in K.
\end{equation}
We shall refer to this generalized Lasso as the {\em $K$-Lasso} for the rest of the paper.  
The set $K$ is meant to capture structure of the signal. In many cases of interest
$K$ behaves as if it were a {\em low-dimensional} set, although it often has full linear algebraic dimension.
For example, to promote sparsity of the solution, one can choose $K$ to be a scaled $\ell_1$ ball, 
and this gives the vanilla Lasso as proposed by R.~Tibshirani \cite{Lasso}.
When the signals are matrices, to promote low rank one can choose $K$ 
to be a scaled ball in the nuclear norm, and this is referred to as the matrix Lasso \cite{tight_low_rank} or trace Lasso \cite{grave2011trace}.

How well can the signal be reconstructed based on the complexity of the set $K$?  Under the linear model, the last two decades have seen the development of a strong theoretical backing for the Lasso from the statistical community, mostly based on a sparsity assumption.  See, e.g., \cite{bunea2007aggregation,greenshtein2006best, bickel2009simultaneous, lockhart2014significance, negahban2012unified, van2009conditions, candes2009near}.  Further, recent results developed from the compressed sensing community give a clean, comprehensive theory for arbitrary signal structure.  See Section \ref{sec: related literature}.  

Consider the more challenging situation, in which there is an unknown non-linearity in the observations. We ask:  
\begin{quote}
\textit{What happens when the $K$-Lasso is used to reconstruct a signal based on non-linear observations?} 
\end{quote} 
On the one hand, Lasso is by design a method for linear regression, and it is dubious
to expect it to work if $\y$ depends non-linearly on $\A \x$.
On the other hand, practitioners have been successfully using Lasso for non-linear (especially binary) 
observations without theoretical backing.

In this paper we demonstrate that $K$-Lasso can be used for non-linear observations. 
We will see that from Lasso's point of view, {\em non-linear} observations behave as scaled and 
{\em noisy linear} observations, and we will characterize the scaling and the noise.   
Furthermore, we assume $\A$ to be Gaussian, but in contrast to much of the literature, 
we allow {\em unknown covariance} of rows.  A particular non-linearity of interest in signal processing is 1-bit quantization, which, when combined with sparse signal structure, leads to the model of 1-bit compressed sensing.  
We believe all previous theoretical results in this area have required knowledge of the covariance of rows 
for the recovery algorithm to be accurate; our work broadens the theory by removing this requirement.
We will describe related literature regarding non-linear observations in Section \ref{sec: related literature} below.

\subsection{Model}		\label{sec: model}

We will work with semiparametric single-index model of a similar form to the one in \cite{pvy}.
Let $\x \in K \subset \R^n$ be a fixed (unknown) \emph{signal} vector, let $\a_i \sim \NN(0, \Sigma)$ be independent random measurement vectors, and let $\A$ be the matrix whose $i$-th row is $\a_i^\tran$.  Let $f_i : \R \to \R$ be independent copies of an \textit{unknown}, random function $f$ 
modeling the non-linearity (it also may be deterministic), which are independent of $\A$. 
We assume that the $m$ observations $y_i$ that form the vector $\y = (y_1,\ldots,y_m)$ 
take the form 
\begin{equation}         \label{eq: model}
y_i = f_i(\ip{\a_i}{\x}).
\end{equation}
Note that the norm of $\x$ is sacrificed in this model since it may be absorbed into the unknown random function $f_i$.  Thus, to simplify presentation, we will assume that $\|\sqrt{\Sigma} \x\|_2 = 1$. We will remark on how to remove this assumption by a rescaling argument.

\subsection{Examples}

We now give two concrete examples of the above model: quantized and binary observations.

A first non-linearity of interest is quantization applied to linear observations.  Then the function $f$ maps $\ip{a_i}{x}$ to a finite alphabet of real numbers.  In this case, the non-linearity is known, and furthermore, it is designed.  Thus, the theoretical error bounds we develop below may be tuned to optimize the error.  This observation has been made in \cite{thrampoulidis2015lasso}.  

On the extreme end, one may consider {\em 1-bit quantization:} $f(\ip{a_i}{x}) = \text{sign}(\ip{a_i}{x})$.  
Measurements of this kind are of special interest due to the simplicity of hardware implementation, and the robustness to multiplicative errors.  We further discuss 1-bit quantization in Section \ref{sec: 1-bit} below.

Interestingly, binary statistical models are quite similar.  For example, $f(\ip{a_i}{x}) = \text{sign}(\ip{a_i}{x} + z_i)$ gives the {\em logistic regression} model, provided that $z_i$ is logit noise.  Other binary models are available by adjusting the distribution of $z_i$.  The classical approach in these models is (regularized) maximum likelihood estimation \cite{negahban2012unified, 1-bit_MC}.  However, it requires knowledge of the form of the nonlinearity, which is equivalent to knowledge of the distribution of $z_i$, and in practice one would often not expect this to be known.  Further, the theory requires the log-likelihood to be \textit{strongly convex}, which ceases to hold when $z_i$ is small compared to $||x||_2$.  Ironically, the noise needs to be roughly larger than the signal in the theoretical treatment of maximum-likelihood estimation (see \cite{1-bit_MC} for a discussion of this point).  In contrast, as we show, the $K$-Lasso does not need knowledge of the non-linearity, and is accurate even when the noise $z_i$ disappears, as in the 1-bit compressed sensing model.

\subsection{Simplified results when $K$ is a subspace} 

To begin in a simpler setting, let us assume that the covariance matrix $\Sigma$ is identity, 
$K$ is a $d$-dimensional subspace, and there is no non-linearity, just an unknown rescaling and noise.  
Thus, we assume that $f_i(u)= \mu u + z_i$  for $u \in \R$,
where $\mu > 0$ and $z_i \sim \NN(0, \sigma^2)$.  Then the observations take the form
\begin{equation}         \label{eq: noisy linear model}
y_i = \mu \ip{\a_i}{\x} + z_i.
\end{equation}
The $K$-Lasso \eqref{eq: estimation} becomes the \textit{least squares estimator} whose behavior is well known.  Let $\xhat$ be the solution to the $K$-Lasso.  Then, the conditional expectation of the squared error with respect to $\A$ satisfies
\[\E \| \xhat - \mu \x\|_2^2 = \sigma^2 \cdot \sum_{i=1}^d \frac{1}{\sigma_i^2(\A_K)}\] 
where $\sigma_i(\A_K)$ is the $i$-th singular value of $A$ restricted to the subspace $K$.  
Since $\A$ is Gaussian, it is well conditioned with high probability as long as the number of observations $m$ is significantly larger than the dimension $d$ of $K$ \cite{V}.  In this case, with high probability, each singular value does not deviate significantly from $\sqrt{m}$ \cite{V} and thus 
\[\E \| \xhat - \mu \x\|_2^2 \approx \frac{d}{m} \sigma^2.\]

Let us make a few observations about the ingredients involved in the above calculation.  First, the $K$-Lasso gives an estimate of a scaled version of $\x$.  Second, note the vital requirement that the number of observations $m$ exceeds the dimension of the subspace $d$.
Third, observe that the size of the scaling and the noise satisfy
\[\mu = \E(f(g) \cdot g) \qquad \text{and} \qquad \sigma^2 = \E(f(g) - \mu g)^2 = \E f(g)^2 - \mu^2,\] 
where $g$ is a standard normal random variable.  

Our main result states that up to a small extra summand, the $K$-Lasso gives the same accuracy for 
{\em non-linear} observations, with $\sigma$ and $\mu$ measured in the same way. To easily compare, we first state this result when $K$ is a subspace. Here and in the rest of the paper, a statement is said to hold with high probability if it holds with probability at least $0.99$.  Further, the symbol $\lesssim$ hides an absolute constant.

\begin{proposition}[Non-linear estimation on a subspace]
\label{prop: subspace}
  Suppose that $\a_i \sim \NN(0, \Id)$, and that $\y$ follows the semi-parametric single index model of Section \ref{sec: model}.  Let $K$ be a $d$-dimensional subspace and assume $\x \in K \cap S^{n-1}$.  
  Suppose that
  \[m \gtrsim d.\]
   Then, with high probability, the non-linear estimator $\xhat$ which minimizes the $K$-Lasso \eqref{eq: estimation} satisfies 
  \begin{equation}         \label{eq: subspace}
  \|\xhat - \mu \x\|_2 \lesssim   \frac{\sqrt{d} \, \sigma + \eta}{\sqrt{m}} 
  \end{equation}
  where 
  \begin{equation}\label{eq: parameters}
  \mu := \E[f(g) \cdot g], \qquad \sigma^2 := \E (f(g) - \mu g)^2, \qquad \eta^2 := \E (f(g) - \mu g)^2 g^2.
  \end{equation}
\end{proposition}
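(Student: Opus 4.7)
The plan is to reduce the non-linear problem to a noisy linear least-squares problem and to exploit the fact that $\mu\x$ lies in the subspace $K$. Define the effective noise $\z = (z_1,\dots,z_m)^\tran$ by $z_i := f_i(\ip{\a_i}{\x}) - \mu \ip{\a_i}{\x}$, so that $\y = \mu \A\x + \z$. Since $\mu\x \in K$, the normal equations for the $K$-Lasso give
\[
\xhat - \mu\x = (\A|_K)^+ \z,
\]
where $\A|_K : K \to \R^m$ is the restriction. This yields
\[
\|\xhat - \mu\x\|_2 \le \frac{\|\mat{P}\z\|_2}{\smin(\A|_K)},
\]
where $\mat{P}$ is the orthogonal projection onto the $d$-dimensional subspace $\A(K) \subset \R^m$. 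Standard Gaussian matrix theory gives $\smin(\A|_K) \gtrsim \sqrt{m}$ with high probability as soon as $m \gtrsim d$, so it suffices to show $\|\mat{P}\z\|_2 \lesssim \sqrt{d}\,\s + \eta$.

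To bound $\|\mat{P}\z\|_2$, I would complete $\x$ to an orthonormal basis $\x, \vect{e}_2,\ldots,\vect{e}_d$ of $K$ and split $\A(K) = E \oplus F$ orthogonally, where $E := \Span(\A\x)$ is one-dimensional and $F$ is the orthogonal complement of $E$ inside $\A(K)$, equal to $\mat{P}_{E^\perp}\Span(\A\vect{e}_2,\ldots,\A\vect{e}_d)$ and of dimension $d-1$. Pythagoras gives $\|\mat{P}\z\|_2^2 = \|\mat{P}_E\z\|_2^2 + \|\mat{P}_F\z\|_2^2$. Writing $g_i := \ip{\a_i}{\x}$, note $\A\x = (g_1,\ldots,g_m)^\tran$ and each $z_i$ depends only on $g_i$ and $f_i$. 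The first piece equals $|\sum_i g_i z_i|/\|\A\x\|_2$; by the defining identity $\mu = \E f(g)g$, the summands $g_i z_i$ are mean-zero with variance $\eta^2$, so $|\sum g_i z_i| \lesssim \sqrt{m}\,\eta$ and $\|\A\x\|_2 \gtrsim \sqrt{m}$, yielding $\|\mat{P}_E\z\|_2 \lesssim \eta$. For the second piece, the key observation is that $\vect{e}_j \perp \x$ implies that the Gaussian vectors $\A\vect{e}_j$ are independent of $\g$ (and of the $f_i$), hence independent of $\z$. Conditioned on $\g$ and the $f_i$'s, the $d-1$ vectors $\mat{P}_{E^\perp}\A\vect{e}_j$ are i.i.d.\ standard Gaussian in $E^\perp \cong \R^{m-1}$, so their span $F$ is Haar-uniform on the $(d-1)$-Grassmannian of $E^\perp$, independent of the fixed vector $\z$. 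A standard random-projection estimate gives $\|\mat{P}_F\z\|_2 \lesssim \sqrt{d/m}\,\|\z\|_2$, and since $\E\|\z\|_2^2 = m\s^2$, concentration produces $\|\mat{P}_F\z\|_2 \lesssim \sqrt{d}\,\s$ with high probability.

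The main obstacle is precisely the dependence between $\z$ and $\A$: both are built from the same Gaussian matrix, so neither a naive operator bound nor an off-the-shelf Hoeffding argument applies. The decoupling trick above is the heart of the proof: $\z$ depends on $\A$ only through $\g = \A\x$, so that the one unavoidable ``dependent'' direction is absorbed into $E$, where the design choice of $\mu$ makes the contribution mean-zero and of size $\eta/\sqrt{m}$, while the remaining $(d-1)$ directions are genuinely fresh Gaussians orthogonal to $\x$ and yield the noise term $\sqrt{d}\,\s/\sqrt{m}$ through an ordinary random-subspace projection. Combining the high-probability bounds on $\smin(\A|_K)$, on $|\sum g_i z_i|$, on $\|\z\|_2$, and on the random projection via a union bound then gives \eqref{eq: subspace}. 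The remaining ingredients are standard: the smallest-singular-value bound follows from Gordon's inequality (or the well-known $\sqrt{m}-\sqrt{d}$ estimate for Gaussian matrices), and the scalar concentration statements only require the finiteness of $\mu, \s, \eta$ and can be obtained from Chebyshev's inequality alone to hit the $0.99$-probability target.
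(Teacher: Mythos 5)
Your proof is correct, but it takes a genuinely different route from the paper. The paper never proves Proposition \ref{prop: subspace} directly: it obtains it as the special case $d(K)\sim d$ of Theorem \ref{thm: main}, whose proof is variational --- compare the loss at $\xhat$ and at $\mu\x$ to get $\frac1m\|\A\h\|_2^2\le\frac2m\ip{\h}{\A^\tran\z}$, lower-bound $\|\A\h\|_2$ on the cone by Gordon's escape-through-the-mesh theorem, and upper-bound $\sup_{\v\in D}\ip{\v}{\A^\tran\z}$ by splitting $\A^\tran\z$ on the \emph{domain} side via $\P=\x\x^\tran$ and $\P^\perp=\Id-\x\x^\tran$. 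You instead exploit the closed-form least-squares solution available only for subspaces, writing $\A(\xhat-\mu\x)=\P_{\A(K)}\z$ and splitting on the \emph{range} side, $\A(K)=\Span(\A\x)\oplus F$. The decoupling insight is the same in both arguments --- $\z$ depends on $\A$ only through $\g=\A\x$, the one entangled direction produces the $\eta$ term via the mean-zero variables $\xi_i=z_ig_i$ (exactly as in the paper's bound on the term $II$), and the remaining directions are fresh Gaussians independent of $\z$ and produce the $\sqrt{d}\,\sigma$ term (the paper's term $I$). What your version buys is elementarity and a sharper geometric picture for subspaces: no Gordon comparison or supremum over a cone is needed, only the singular values of a Gaussian matrix, a Haar-random projection estimate, and Chebyshev; all the steps you flag as needing verification (independence of $\A\vect{e}_j$ from $(\g,f_i)$, $\E[g_iz_i]=0$ with variance $\eta^2$, $\E\|\z\|_2^2=m\sigma^2$) do check out. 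What it gives up is generality: the pseudo-inverse identity has no analogue for a general set $K$, which is why the paper's variational argument is the one that scales to Theorems \ref{thm: main} and \ref{thm: main no cone}.
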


One sees that this mirrors the result for linear observations aside from the extra summand $\eta/\sqrt{m}$, 
which becomes quite small with a moderate number of observations $m$.
For example, in the noisy linear model \eqref{eq: noisy linear model} one has $\eta = \sigma$, 
so this result gives the classic error rate as a special case.  

Results of the above flavour have been rigorously proven in the statistics literature \cite{brillinger2012generalized}, with a focus on asymptotic behaviour of the error.  In this paper, we extend these ideas to modern trends in signal processing and statistics, in which it is assumed that the signal belongs to some non-linear low-dimensional signal structure, such as the set of sparse vectors or low-rank matrices.
 We now proceed to our main results in which $K$ will be allowed to be a general set.
\subsection{Main results}  
We will give two results below, one specialized to the case when 
the scaled signal $\mu \x$ lies at an extreme point of $K$ with (small) \textit{tangent cone}, 
and one which only assumes that $\mu \x$ lies in $K$.

\begin{definition}[Tangent cone]
The tangent cone\footnote{To allow non-convex K, the above is slight variation on the standard definition of tangent cone \cite{jahn2007introduction}.  The tangent cone may also be called the \textit{descent cone}.} of $K$ at $\x$ is
\[D(K, \x) := \{\tau \h : \tau \geq 0, \h \in K - \x\}.\]
\end{definition}
For sets with non-smooth boundary, such as the $\ell_1$ ball or the nuclear norm ball, 
the tangent cone at a boundary point can be quite narrow, and intuitively should
behave like a low-dimensional subspace.  We give an illustrative example of a tangent cone in Figure \ref{fig: tangent cone}, although, in a two-dimensional representation, we cannot do justice to the high-dimensional effects which allow convex sets to have extremely narrow tangent cones.  

\begin{figure}
\centering
\includegraphics[width = .4\textwidth]{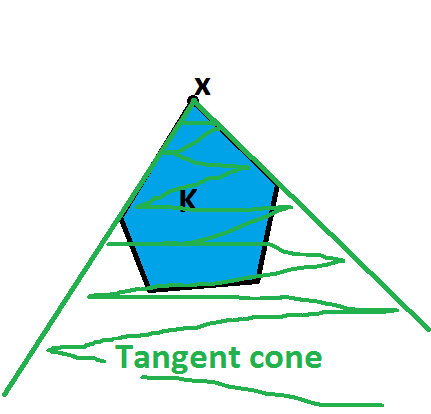}
\caption[]{The tangent cone}
\label{fig: tangent cone}
\end{figure}


While $\A$ may be singular, it can be quite well conditioned when restricted to the tangent cone; it is not surprising that this restricted conditioning of $\A$ can determine the accuracy of the solution to \eqref{eq: estimation}.  Further, this restricted condition number can be well understood via Gordon's escape through the mesh theorem (see Theorem \ref{thm: gordon}). It states that the restriction of $\A$ onto $K$ is well conditioned provided that the number of observations $m$ exceeds the {\em effective dimension} of $K$. 
The effective dimension is measured in Gordon's theorem by the notion of Gaussian mean width. 
Let us recall the notion of the local (Gaussian) mean width; see \cite{PV_IEEE, pvy, V} for further discussion 
of the mean width and how is serves as a measure of effective dimension.
\begin{definition}[Local mean width]
  The {\em local mean width} of a subset $K \subset \R^n$ is a function of scale $t \ge 0$
  defined as
  $$
  w_t(K) = \E \sup_{\x \in K \cap tB_2} \ip{\x}{\g},
  $$
  where $B_2$ denotes the unit Euclidean ball in $\R^n$.
\end{definition}

Let us pause to explain the heuristic meaning of the local mean width of a cone $D$. 
The square of the mean width, $w_1(D)^2$, can be described as a measure of {\em effective dimension} 
of $D$. This can be seen on the following two examples. 
First, let $D$ be $d$-dimensional subspace in $\R^n$. It is not difficult to check that 
$$
w_1(D)^2 \sim d,
$$
up to a absolute multiplicative constants. Thus in this case, the square of the mean width 
is equivalent to the algebraic dimension $d$.

A deeper example is where $D = D(B_1^n,\x)$ is the descent cone of the unit $\ell_1$
ball $B_1^n = \{ \u \in \R^n :\; \|\u\|_1 \le 1\}$ at some point $\x$ on the boundary of $B_1^n$.
Suppose $\x$ is $s$-sparse, meaning that $\x$ has $s$ non-zero coordinates. It should be clear 
that the smaller sparsity $s$, the thinner the descent cone $D$ is. Quantitatively, this is captured
by the notion of local mean width, which can be shown (see e.g. \cite{chandrasekaran2012convex})
to behave as follows:
$$
w_1(D)^2 \sim s \log(n/s).
$$
Thus, up to a logarithmic factor, the square of the mean width is again equivalent to the dimensionality 
of the signal $\x$, which is its sparsity $s$. 

We refer the reader to \cite[Section 2]{PV_IEEE} where the notion of mean 
width is discussed in more detail, as well as to \cite{ALMT} where an equivalent concept 
of {\em statistical dimension} is introduced. 

\medskip

Let us first state our first main result specialized to the case when $\Sigma = \Id$ and to descent-cone structure. 

\begin{theorem}[Non-linear estimation with tangent cone structure]				\label{thm: main}
  Suppose that $\a_i \sim \NN(0, \Id)$, 
  $\x \in S^{n-1}$, 
  and that $\y$ follows the semi-parametric single index model of Section \ref{sec: model}.
  Assume that $\mu \x \in K$, and let 
  $d(K): = w_1(D(K, \mu \x))^2$.
  Suppose that
  \[m \gtrsim d(K).\]
  Then, with high probability, the solution $\xhat$ of the $K$-Lasso \eqref{eq: estimation} satisfies 
  \begin{equation}         \label{eq: main}
  \|\xhat - \mu \x\|_2 \lesssim \frac{\sqrt{d(K)} \, \sigma + \eta}{\sqrt{m}}  
  \end{equation}
  where $\mu$, $\eta$, and $\sigma$ are defined in \eqref{eq: parameters}.
\end{theorem}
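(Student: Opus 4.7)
The plan is to reduce the non-linear observation model to a linear model with an effective additive noise, and then run a Lasso-style argument on the tangent cone. Define
\[
  z_i := f_i(\ip{\a_i}{\x}) - \mu \ip{\a_i}{\x}, \qquad \z := (z_1,\ldots,z_m),
\]
so that $y_i = \mu \ip{\a_i}{\x} + z_i$. The choice $\mu = \E[f(g) g]$ is precisely the one that makes $\E[z_i \ip{\a_i}{\x}] = 0$, and by construction $\E z_i^2 = \sigma^2$ and $\E[z_i^2 \ip{\a_i}{\x}^2] = \eta^2$. Set $\h := \xhat - \mu \x$; since $\xhat \in K$, $\h$ lies in the tangent cone $D := D(K, \mu \x)$. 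Optimality of $\xhat$ in $K$ and feasibility of $\mu \x$ give $\|\A \h - \z\|_2 \le \|\z\|_2$, which upon squaring yields the basic inequality
\[
  \|\A \h\|_2^2 \;\le\; 2 \ip{\A \h}{\z}.
\]

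For the left-hand side I would apply Gordon's escape-through-the-mesh theorem (Theorem~\ref{thm: gordon}): since $m \gtrsim w_1(D)^2 = d(K)$, with high probability $\|\A \u\|_2 \ge c\sqrt{m}$ uniformly over $\u \in D \cap S^{n-1}$, and hence $\|\A \h\|_2 \ge c\sqrt{m}\,\|\h\|_2$. For the right-hand side I would exploit Gaussian rotational invariance by writing $\a_i = g_i \x + \a_i^\perp$, where $g_i := \ip{\a_i}{\x} \sim \NN(0,1)$ and $\a_i^\perp$ is a standard Gaussian in $\x^\perp$ independent of $g_i$. Since $z_i$ depends on $\a_i$ only through $g_i$ (and on the independent $f_i$), it is independent of $\a_i^\perp$. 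Hence
\[
  \A^\tran \z \;=\; T\,\x + W, \qquad T := \sum_{i=1}^m z_i g_i, \qquad W := \sum_{i=1}^m z_i \a_i^\perp,
\]
so $\ip{\A \h}{\z} = T\,\ip{\h}{\x} + \ip{\h}{W}$. The scalar $T$ is a sum of $m$ i.i.d.\ centered variables of variance $\eta^2$, so $|T| \lesssim \sqrt{m}\,\eta$ with high probability. Conditional on $\z$, the vector $W$ is $\NN(0, \|\z\|_2^2(\Id - \x\x^\tran))$, and $\|\z\|_2 \lesssim \sqrt{m}\,\sigma$ with high probability (Markov on $\E\|\z\|_2^2 = m\sigma^2$); the Gaussian mean-width bound $\E \sup_{\u \in D \cap B_2} \ip{\u}{W} \lesssim \|\z\|_2 \, w_1(D)$ combined with Gaussian concentration yields $\ip{\h}{W} \lesssim \|\h\|_2 \, \sigma \sqrt{m\,d(K)}$. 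Using $|\ip{\h}{\x}| \le \|\h\|_2$ and inserting into the basic inequality gives
\[
  c^2 m\,\|\h\|_2^2 \;\le\; C\,\|\h\|_2 \bigl( \sqrt{m}\,\eta + \sigma \sqrt{m\,d(K)} \bigr),
\]
which rearranges to~\eqref{eq: main}.

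The principal obstacle is the uniform control of $\ip{\A \h}{\z}$ over the tangent cone, and in particular obtaining the clean separation of its two distinct contributions that matches the two summands in~\eqref{eq: main}. The parallel/perpendicular decomposition of $\a_i$ is what achieves this: it isolates a rank-one \emph{bias} piece of size $\eta \sqrt{m}$ (producing the $\eta/\sqrt{m}$ term) from a genuinely high-dimensional \emph{variance} piece whose supremum over $D \cap B_2$ is of order $\sigma \sqrt{m\,d(K)}$ (producing the $\sigma \sqrt{d(K)/m}$ term). The decomposition, and thus the independence of $z_i$ from $\a_i^\perp$, rests squarely on the rotational invariance of the Gaussian; a direct estimate treating $z_i$ and $\a_i$ as coupled would merge the two terms into a single weaker bound.
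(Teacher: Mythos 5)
Your proposal is correct and follows essentially the same route as the paper: the same basic inequality $\|\A\h\|_2^2 \le 2\ip{\A\h}{\z}$, Gordon's theorem on the tangent cone for the left side, and for the right side your decomposition $\a_i = g_i\x + \a_i^\perp$ is exactly the paper's projection split $\A^\tran\z = \P\A^\tran\z + \P^\perp\A^\tran\z$ with $\P = \x\x^\tran$, yielding the same two contributions $t\sqrt{m}\,\eta$ and $\sigma\sqrt{m}\,w_1(D)$. The only (immaterial) difference is that you bound the two pieces by separate high-probability events, whereas the paper bounds the expectation of the supremum of their sum and applies Markov once.
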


It should be clear that this result extends Proposition \ref{prop: subspace}
from linear to non-linear observations, and from subspaces to general sets. 
To see this, recall our observation that if $K$ is a $d$-dimensional subspace, then 
$d(K) \sim d$ up to an absolute constant factor.


\begin{remark}[Boundary of $K$]
For the above theorem to be especially useful, $\mu \x$ needs to lie on the boundary 
of $K$. Otherwise, the tangent cone is the entire $\R^n$, and the effective dimension $d(K)$ is 
of order of $n$.
In this case, the estimate becomes accurate only when the number of observations $m$ exceeds the 
ambient dimension $n$ rather than the effective dimension of the cone, which may be significantly 
smaller.  Thus, in practice, one would like to rescale $K$ to put $\mu \x$ on the boundary. 
 If $\mu \x$ does not lie precisely on the boundary, we may appeal to our more general Theorem \ref{thm: main no cone} below.  Further, we note that the unconstrained version of the $K$-Lasso overcomes this obstacle.  This has been proven in the asymptotic setting in \cite{thrampoulidis2015lasso}, which built upon the ideas in this papper.
\end{remark}

A substitution argument generalizes the above result to allow an unknown covariance matrix.

\begin{corollary}[Non-linear estimation with unknown covariance matrix]				\label{cor: unknown covariance}
\label{cor: main}
  Suppose that $\a_i \sim \NN(0, \Sigma)$, 
  $\sqrt{\Sigma} \x \in S^{n-1}$, and that $\y$ follows the semi-parametric single index model of Section \ref{sec: model}. 
  Assume that $\mu \x \in K$, and let $d(K, \Sigma): = w_1(\sqrt{\Sigma} D(K, \mu \x))^2$.
  Suppose that
  \[m \gtrsim d(K, \Sigma).\]
   Then, with high probability, the non-linear estimator $\xhat$ which minimizes the $K$-Lasso \eqref{eq: estimation} satisfies 
  \begin{equation}         \label{eq: main with Sigma}
  \|\sqrt{\Sigma}(\xhat - \mu \x)\|_2 
    \lesssim \frac{\sqrt{d(K, \Sigma)} \, \sigma + \eta}{\sqrt{m}}  
  \end{equation}
  where $\mu$, $\eta$, and $\sigma$ are defined in \eqref{eq: parameters}.
\end{corollary}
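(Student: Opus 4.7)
The plan is to reduce Corollary \ref{cor: unknown covariance} to Theorem \ref{thm: main} by a linear change of variables that whitens the measurement vectors. Write $\a_i = \sqrt{\Sigma}\, \g_i$ with $\g_i \sim \NN(0, \Id)$. Then
\[
\ip{\a_i}{\x'} = \ip{\g_i}{\sqrt{\Sigma}\,\x'}
\qquad \text{for every } \x' \in \R^n,
\]
so if we substitute $\tilde{\x}' := \sqrt{\Sigma}\,\x'$ and $\tilde{\A}$ for the matrix whose rows are $\g_i^\tran$, the residual $\|\A\x' - \y\|_2$ equals $\|\tilde{\A}\tilde{\x}' - \y\|_2$. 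Consequently, minimizing the $K$-Lasso over $\x' \in K$ is the same optimization as minimizing $\|\tilde{\A}\tilde{\x}' - \y\|_2$ over $\tilde{\x}' \in \tilde{K} := \sqrt{\Sigma}\,K$, with the bijective correspondence $\xhat \leftrightarrow \tilde{\xhat} = \sqrt{\Sigma}\,\xhat$.

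Next I would verify that the transformed problem fits the hypotheses of Theorem \ref{thm: main}. The measurement vectors $\g_i$ are standard Gaussian. Setting $\tilde{\x} := \sqrt{\Sigma}\,\x$, the assumption $\|\sqrt{\Sigma}\,\x\|_2 = 1$ gives $\tilde{\x} \in S^{n-1}$, and the observation model becomes $y_i = f_i(\ip{\g_i}{\tilde{\x}})$, which is exactly the single-index model with the same nonlinearity $f$. Because $\ip{\a_i}{\x}$ and $\ip{\g_i}{\tilde{\x}}$ have the identical $\NN(0,1)$ distribution, the scalar parameters $\mu$, $\sigma$, $\eta$ defined in \eqref{eq: parameters} are unaffected by the substitution. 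The constraint $\mu \x \in K$ becomes $\mu \tilde{\x} = \sqrt{\Sigma}\,\mu \x \in \tilde{K}$, as needed.

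It remains to identify the effective dimension in the whitened coordinates. Since $\sqrt{\Sigma}$ is linear and invertible (we may harmlessly assume $\Sigma \succ 0$; otherwise we work on its range),
\[
D(\tilde{K}, \mu \tilde{\x})
= D(\sqrt{\Sigma}\,K, \sqrt{\Sigma}\,\mu \x)
= \sqrt{\Sigma}\,D(K, \mu \x),
\]
so $w_1\bigl(D(\tilde{K}, \mu \tilde{\x})\bigr)^2 = w_1\bigl(\sqrt{\Sigma}\,D(K, \mu \x)\bigr)^2 = d(K, \Sigma)$. Applying Theorem \ref{thm: main} to the tuple $(\g_i, \tilde{\x}, \tilde{K}, \tilde{\xhat})$ under the hypothesis $m \gtrsim d(K, \Sigma)$ yields, with high probability,
\[
\|\tilde{\xhat} - \mu \tilde{\x}\|_2
\lesssim \frac{\sqrt{d(K, \Sigma)}\,\sigma + \eta}{\sqrt{m}}.
\]
Translating back via $\tilde{\xhat} - \mu \tilde{\x} = \sqrt{\Sigma}(\xhat - \mu \x)$ gives precisely \eqref{eq: main with Sigma}.

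There is no real obstacle beyond bookkeeping. The one subtlety worth double-checking is that Theorem \ref{thm: main} as stated does not require $K$ to be convex or closed, so we are free to apply it to $\tilde{K} = \sqrt{\Sigma}\,K$; if $\Sigma$ were singular the argument would need a minor adjustment (restrict attention to the range of $\sqrt{\Sigma}$ and pose the reduced problem there), but under the natural assumption $\Sigma \succ 0$ the reduction is entirely direct.
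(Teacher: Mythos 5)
Your proof is correct and follows essentially the same substitution argument as the paper: whiten the measurements via $\a_i = \sqrt{\Sigma}\,\g_i$, transfer the $K$-Lasso to the $\sqrt{\Sigma}K$-Lasso, and apply Theorem \ref{thm: main}. Your version is slightly more careful than the paper's, explicitly verifying that $\mu, \sigma, \eta$ are unchanged and that the tangent cone transforms as $D(\sqrt{\Sigma}K, \sqrt{\Sigma}\mu\x) = \sqrt{\Sigma}\,D(K,\mu\x)$, which the paper leaves implicit.
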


\begin{proof}
We may set $\a_i := \sqrt{\Sigma}\g_i $ where $\g_i \sim \NN(0, \Id)$.  Then $\ip{\a_i}{\x} = \ip{\g_i}{\sqrt{\Sigma} \x}$.  Thus, by replacing $\x$ with $\sqrt{\Sigma} \x$, we recover the model in which $\Sigma = \Id$.  
 Further, we may substitute $\x'$ with $\sqrt{\Sigma} \x'$ in the $K$-Lasso to arrive at the $\sqrt{\Sigma} K$-Lasso:
\begin{equation}       
\text{minimize } \|\G x' - y\|_2 \text{ subject to } x' \in \sqrt{\Sigma} K
\end{equation}
where $\G$ is a matrix which contains $\g_i^\tran$ as its $i$-th row.  We have now completely reduced to the setup of Theorem \ref{thm: main}, with the caveat that we have substituted $\x,\x',$ and $K$ by $\sqrt{\Sigma} \x$, $\sqrt{\Sigma} \x'$, and $\sqrt{\Sigma} K$.  Apply the theorem to finish the proof of the corollary.
\end{proof}

\begin{remark}[Removing $\Sigma$ from the mean width]
If the covariance matrix $\Sigma$ is well conditioned, its effect on the 
error \eqref{eq: main with Sigma} can be easily evaluated using the inequality
\begin{equation}\label{eq: control mean width}
d(K, \Sigma) \leq \text{cond}(\Sigma) \cdot d(K).
\end{equation}
where $\text{cond}(\Sigma) = \|\Sigma\| \cdot \| \Sigma^{-1}\|$ denotes the condition number
and $d(K) = d(K, \Id)$ is the same as Theorem~\ref{thm: main}.
Before we prove this bound, let us mention that in some situations the
effect of $\Sigma$ is much smaller than it predicts -- for example, if $K$ is a subspace, 
then $d(K, \Sigma) = d(K)$.

To check \eqref{eq: control mean width}, note that for the tangent cone 
$D = D(K, \mu x)$ we have
\begin{equation}         \label{eq: w1 compare begin}
w_1(\sqrt{\Sigma}D) = \E \sup_{\x \in\sqrt{\Sigma} D \cap B_2} \ip{\x}{\g}
\le \|\sqrt{\Sigma}^{-1}\| \cdot \E \sup_{\x \in \sqrt{\Sigma}(D \cap B_2)} \ip{\g}{\x},
\end{equation}
where the inequality follows from the elementary containment 
$\sqrt{\Sigma} D \cap B_2 
\subset \|\sqrt{\Sigma}^{-1}\| \cdot \sqrt{\Sigma} (D \cap B_2)$.
A straightforward application of Slepians inequality \cite{V} 
then bounds the quantity in \eqref{eq: w1 compare begin} by
$\|\sqrt{\Sigma}^{-1}\| \cdot \|\sqrt{\Sigma}\| \cdot w_1(D)$.
Thus, we conclude \eqref{eq: control mean width}.
\end{remark}

\begin{remark}[Removing the assumption that $\|\sqrt{\Sigma} \x\|_2 = 1$]
The theory may be generalized to the case when $\|\Sigma \x\|_2 \neq 1$ with a simple rescaling argument.  Let $\delta = \|\sqrt{\Sigma} \x\|_2$ and let $\tilde{\x} := \x/\delta$.  Observe that
\[f(\ip{\a_i}{\x}) = f(\delta \ip{\a_i}{\tilde{\x}}) =: \tilde{f}(\ip{\a_i}{\tilde{\x}}).\]
Thus, the theorem applies to the estimation of $\tilde{x}$ with parameters
\[\mu := \E[\tilde{f}(g) \cdot g], \qquad \sigma^2 := \E (\tilde{f}(g) - \mu g)^2, \qquad \eta^2 := \E (\tilde{f}(g) - \mu g)^2 g^2.\]
\end{remark}
In some cases, one does not expect the tangent cone to have especially small mean width.  
As a motivating example, in the field of compressed sensing, it is standard to call $\x$ {\em compressible} if it belongs to a scaled $\ell_p$ ball for  $p \in (0,1)$, or if the ratio $\|\x\|_1/\|\x\|_2$ is small.  In this case, which contrasts with the case of exact sparsity, the tangent cone may have mean width comparable to the ambient dimension.  However, the set $K$ itself can still behave in a low-dimensional fashion.  Since $K$ is not necessarily a cone, and is not scale invariant, it is necessary to characterize dimension with a scaling parameter.  Fortunately, the local mean width accomplishes this task with $t$ as the scaling parameter, and  
$w_t(K - \mu \x)^2/t^2$ serving as a measure of the dimension at scale $t$.  

The next theorem considers a general signal structure.  

\begin{theorem}[Non-linear estimation without tangent cone structure]				\label{thm: main no cone}
  Suppose that $\a_i \sim \NN(0, \Id)$, 
  $\x \in S^{n-1}$, and that $\y$ follows the semi-parametric single index model of Section \ref{sec: model}.
  Assume that $\mu \x \in K$ where $K$ is convex,\footnote{More generally, the proof only requires that $K - \mu \x$ 
    be contained in a star shaped set.  This star shaped set can take the place of $K - \mu \x$ in the results of this theorem.}   
  and let $d_t (K) := w_t(K - \mu \x)^2 / t^2$.
  Then, the following holds with high probability. 
  For any $t > 0$ such that
  \[m \gtrsim  d_t(K),\]
   the non-linear estimator $\xhat$ which minimizes the $K$-Lasso \eqref{eq: estimation} satisfies 
  \begin{equation}         \label{eq: main no cone}
  \|\xhat - \mu \x\|_2 \lesssim  \frac{\sqrt{d_t(K)} \, \sigma + \eta}{\sqrt{m}} + t
  \end{equation}
  where $\mu$, $\eta$, and $\sigma$ are defined in \eqref{eq: parameters}.
\end{theorem}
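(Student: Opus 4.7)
The plan is to treat the non-linearity as additive noise and carry out a Gordon-style analysis on a rescaled version of the error vector. Set $g_i := \ip{\a_i}{\x}$ and $z_i := f_i(g_i) - \mu g_i$, so that $y_i = \mu \ip{\a_i}{\x} + z_i$. The definitions in \eqref{eq: parameters} give $\E[z_i g_i] = 0$, $\E z_i^2 = \sigma^2$, and $\E z_i^2 g_i^2 = \eta^2$. A crucial structural observation is that each $z_i$ depends only on $g_i$; by Gaussian rotational invariance the orthogonal part $\a_i^{\perp} := \a_i - g_i \x$ is independent of both $g_i$ and $z_i$. Writing $\h := \xhat - \mu \x \in E := K - \mu \x$ (convex), the optimality of $\xhat$ against the feasible point $\mu \x$ yields $\|\A\h - \z\|_2^2 \leq \|\z\|_2^2$, i.e.
\[
\|\A \h\|_2^2 \leq 2 \ip{\A \h}{\z}.
\]

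If $\|\h\|_2 \leq t$, the bound \eqref{eq: main no cone} is immediate. Otherwise, convexity of $E$ (it is star-shaped at the origin) implies $\widetilde{\h} := t\,\h/\|\h\|_2 \in E \cap t\,S^{n-1}$, and homogeneity transforms the inequality into
\[
\frac{\|\h\|_2}{t}\,\|\A \widetilde{\h}\|_2^2 \leq 2\,\ip{\A \widetilde{\h}}{\z}.
\]
It therefore suffices to produce a uniform lower bound on $\|\A \u\|_2$ and a uniform upper bound on $\ip{\A \u}{\z}$ over $\u \in E \cap t\,S^{n-1}$. For the lower bound, Gordon's escape-through-the-mesh inequality (Theorem~\ref{thm: gordon}) together with concentration gives $\inf_{\u} \|\A \u\|_2 \gtrsim \sqrt{m}\,t - w_t(E)$, which is $\gtrsim \sqrt{m}\,t$ as soon as $m \gtrsim d_t(K) = w_t(E)^2/t^2$.

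For the cross term, I split $\ip{\a_i}{\u} = g_i \ip{\x}{\u} + \ip{\a_i^{\perp}}{\u}$, giving
\[
\ip{\A \u}{\z} = \ip{\x}{\u} \sum_{i=1}^m g_i z_i + \sum_{i=1}^m \ip{\a_i^{\perp}}{\u}\, z_i.
\]
The first piece factors: $|\ip{\x}{\u}| \leq t$, while $\sum_i g_i z_i$ has mean zero and variance $m\eta^2$, hence is $\lesssim \sqrt{m}\,\eta$ with high probability by a scalar tail bound. For the second piece, condition on $(g_i,z_i)_{i=1}^m$: the process $\u \mapsto \sum_i \ip{\a_i^{\perp}}{\u}\,z_i$ is then a centred Gaussian process whose increment standard deviation is at most $\|\z\|_2 \|\u - \v\|_2$, so Slepian comparison bounds its expected supremum over $E \cap t\,B_2$ by $\|\z\|_2\,w_t(E)$. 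Combining with the concentration $\|\z\|_2 \lesssim \sqrt{m}\,\sigma$ yields
\[
\sup_{\u \in E \cap t\,S^{n-1}} |\ip{\A \u}{\z}| \lesssim \sqrt{m}\,\bigl(t\,\eta + \sigma\,w_t(E)\bigr).
\]
Plugging both estimates into the rescaled inequality, dividing by $\sqrt{m}\,t$, and using $w_t(E)/t = \sqrt{d_t(K)}$, one solves for $\|\h\|_2$ to obtain exactly the claimed bound \eqref{eq: main no cone}.

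The main obstacle is the uniform control of $\ip{\A \u}{\z}$: since $\z$ depends on $\A$, the two factors cannot be treated as independent. The decomposition of $\a_i$ along and orthogonal to $\x$ is the essential trick, because it isolates the entire $\A$-dependence of $\z$ into the scalar $g_i$ while leaving the $(n-1)$-dimensional component $\a_i^{\perp}$ independent of $\z$; this is what enables a Gaussian-process bound and produces the $\sigma\,w_t(E)$ term (the noise scaled by the effective dimension) rather than a trivial, dimension-free estimate.
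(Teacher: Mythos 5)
Your proposal is correct and follows essentially the same route as the paper: the same reduction to $\|\A\h\|_2^2 \le 2\ip{\A\h}{\z}$, the same star-shaped rescaling of $\h$ to the sphere of radius $t$, and the same decomposition of the cross term along $\x$ and its orthogonal complement (your two pieces are exactly the paper's $\P\A^\tran\z$ and $\P^\perp\A^\tran\z$, yielding the $t\eta\sqrt m$ and $\sigma w_t\sqrt m$ contributions). The only substantive divergence is the restricted lower bound on $\|\A\u\|_2$ over $(K-\mu\x)\cap tS^{n-1}$: you invoke Gordon's escape-through-the-mesh for a general subset of the sphere (valid, though the paper states Theorem~\ref{thm: gordon} only for cones), whereas the paper derives it from a uniform deviation inequality for $\|\A\u\|_1$ combined with $\|\A\v\|_2\ge\|\A\v\|_1/\sqrt m$; both give the needed $c\sqrt m\,t$ bound under $m\gtrsim d_t(K)$.
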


Note that one may derive Theorem \ref{thm: main} by taking the limit as $t$ goes to zero in the above theorem.  However, in the proofs we will give a simpler and more straightforward route to the proof of Theorem \ref{thm: main}.

\begin{remark}[Non-trivial covariance matrix]
As above, this result can be generalized to the case when the covariance matrix of the rows is $\Sigma \neq \Id$.  One would just define $d_t(K, \Sigma)$ in a straightforward way similar to that in Corollary \ref{cor: unknown covariance}.
\end{remark}

\subsection{Key idea in the proof}
\label{ssec: key idea in the proof}
While it may be surprising that the $K$-Lasso is provably accurate even under the (non-linear) single-index model, it becomes much clearer when one observes that the expected loss, $\E ||\A \x' - \y||_2^2$, is minimized by $\mu \x$.  In other words, regardless of the form of the non-linearity, the expected squared error is minimized by a multiple of the original signal.  See Section \ref{sec: proofs} for a proof.

In fact, one may transform the single-index model into a scaled linear model with an unusual noise term.  Define an \textit{induced} noise vector $\z$ to satisfy
\[ \y = \A \mu \x + \z.\]
One may not expect $\z$ to play the role of noise, since it generally does not have zero mean, and is not independent of $\A$.  However, $z_i$ is uncorrelated with $\a_i$ (see Section \ref{sec: proofs}).  

We note that under this scaled linear model, one could use standard techniques to derive error bounds if $\z$ were deterministic, or independent of $\A$ \cite{oymak2013simple}, or if $\z$ were sub-Gaussian.  However, since we make quite mild assumptions in our single-index model, only implicitly assuming that the parameters $\mu$, $\sigma$, and $\eta$ are well-defined, this induced noise may have heavy tails and requires novel analysis.  Some of the tools for this analysis are available in the recent work \cite{pvy} by the current authors and Yudovina.  However, this earlier paper did not apply to the $K$-Lasso, and there were many technical details needed to extend these results.  In particular, the extra steps in the proof of Theorem \ref{thm: main no cone} are new ideas, as well as the method to give results with non-trivial covariance matrix.  We give a detailed comparison with this earlier work and others in the next section.

\section{Related literature}
\label{sec: related literature}
There is now a precise and comprehensive theory of signal reconstruction from {\em linear} observations, which takes into account signal structure.  While it is largely motivated by the quite modern area of \textit{compressed sensing} \cite{CSbook, CSbook2}, it is rooted in results developed in the older areas of \textit{geometric functional analysis} \cite{vershynin2011lectures, gordonEscapeMesh} and \textit{convex integral geometry} \cite{schneider2008stochastic}.  To leverage these tools, it is vital to assume that the measurement matrix $\A$ is random.  We give a brief overview of the results most closely aligned with this work.  The literature that we describe below takes $\A$ to be a matrix with independent Gaussian or sub-Gaussian entries.

In the noiseless case, signal reconstruction is possible as soon as the number of observations exceeds the manifold dimension \cite{eldar2012uniqueness}.  Even in the noisy case, there is a large pool of theory addressing signal reconstruction based on manifold dimension \cite{baraniuk2009random, wakin2010manifold, yap2011stable, eftekhari2013new}.  However, in the noisy case, it is necessary to make extra structural assumption of the set $K$ beyond assuming that it has small manifold dimension.  Otherwise, signal reconstruction based on a number of observations comparable to the manifold dimension can be unstable \cite{giryes2014effective}.  

The Gaussian mean width gives an alternative measure of dimension.  When it is applicable, it leads to simpler assumptions.  Indeed, as described above, the Gaussian mean width controls the conditioning of $\A$ when restricted to a cone, as proved in Gordon's escape through the mesh theorem.  Rudelson and Vershynin \cite{rudelson2008sparse} leveraged this result in the compressed sensing setup, showing that the signal could be reconstructed as long as the number of observations exceeded the squared Gaussian mean width of the tangent cone; Stojnic continued in this line of research \cite{stojnic2009various}.  Chandrasekaran et al.~\cite{chandrasekaran2012convex} extended this result to general convex bodies $K$.  Amelunxen et al.~\cite{ALMT} took a different route, synthesizing tools from conic integral geometry to give a precise phase transition for the number of observations needed to reconstruct $\x$.  There work is based on the \textit{statistical dimension}, which is roughly equivalent to the mean width, but has some extra convenient properties (see \cite{ALMT}).  This showed that previous results were tight.  A line of work by Thrampoulidis, Oymak, and Hassibi \cite{oymak2013simple, oymak2013squared, thrampoulidis2014simple} concentrated on the precise reconstruction error from noisy observations, and also considered unconstrained versions of the $K$-Lasso.  Our theoretical results in the non-linear case can be seen to mirror Theorem \cite[Theorem 1]{oymak2013simple} in the linear case.  We state a simplified version of this theorem, specialized to Gaussian noise (see the original theorem for a very careful treatment of constants).  

\begin{theorem}
  Suppose that $\a_i \sim \NN(0, \Id)$, 
  $\x \in S^{n-1}$, and that $\y$ follows the noisy linear model \eqref{eq: noisy linear model}.
  Assume that $\mu \x \in K$, and let 
  $d(K): = w_1(D(K, \x))^2$.
  Suppose that
  \[m \gtrsim d(K).\]
  Then, with high probability, the solution $\xhat$ of the $K$-Lasso \eqref{eq: estimation} satisfies 
  \[
  \|\xhat - \x\|_2 \lesssim \frac{\sqrt{d(K)} \, \sigma}{\sqrt{m}}.
  \]
\end{theorem}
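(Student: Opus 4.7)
The plan is to follow the standard optimality-inequality reduction for the constrained Lasso, combining Gordon's escape-through-the-mesh theorem for the lower bound with a Gaussian-process argument for the cross term. For the statement as written it suffices to treat $\mu=1$, so the model reads $\y = \A\x + \z$ with $\z \sim \NN(0,\sigma^2 \Id)$ independent of $\A$. Set $\h := \xhat - \x$; since $\xhat, \x \in K$, the error $\h$ lies in the tangent cone $D := D(K,\x)$. Comparing objective values at $\xhat$ and $\x$ and using $\y = \A\x + \z$ produces the familiar basic inequality
\[
\|\A \h\|_2^2 \;\le\; 2\,\z^{\tran}\A\h.
\]

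For the lower bound, apply Gordon's escape-through-the-mesh theorem to the cone $D$. Under the hypothesis $m \gtrsim d(K) = w_1(D)^2$, it yields, with high probability, the one-sided restricted isometry
\[
\|\A\v\|_2 \;\ge\; c\sqrt{m}\,\|\v\|_2 \qquad \text{for all } \v \in D,
\]
where $c>0$ is an absolute constant. Applied to $\v=\h$, and combined with the basic inequality, this gives (dividing by $\|\h\|_2$ when $\h\neq 0$)
\[
c^2 m\,\|\h\|_2 \;\le\; 2M, \qquad M := \sup_{\v \in D \cap S^{n-1}} \z^{\tran}\A\v.
\]

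The remaining step is to control $M$. One cannot simply treat $\v \mapsto \z^{\tran}\A\v$ as a Gaussian process, since it is bilinear in jointly Gaussian inputs. The trick is to condition on $\z$: the $j$-th coordinate of the row vector $\z^{\tran}\A$ equals $\sum_i z_i A_{ij} \sim \NN(0,\|\z\|_2^2)$, and different columns of $\A$ are independent, so conditionally on $\z$ one has $\z^{\tran}\A \stackrel{d}{=} \|\z\|_2\,\g^{\tran}$ with $\g \sim \NN(0,\Id)$. Hence
\[
M \;\stackrel{d}{=}\; \|\z\|_2 \cdot \sup_{\v \in D \cap S^{n-1}} \ip{\g}{\v}.
\]
The Borell-TIS inequality bounds $\sup_{\v \in D \cap S^{n-1}} \ip{\g}{\v}$ by its mean $\le w_1(D) = \sqrt{d(K)}$ up to a sub-Gaussian deviation, and $\chi^2$-concentration gives $\|\z\|_2 \lesssim \sigma\sqrt{m}$. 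Intersecting these events by a union bound yields $M \lesssim \sigma\sqrt{m\,d(K)}$ with high probability, and substituting into the previous display produces $\|\h\|_2 \lesssim \sigma\sqrt{d(K)/m}$, as claimed.

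The main subtlety I anticipate is precisely this identification: converting the bilinear form $\z^{\tran}\A\v$ into a genuine Gaussian process via conditioning on $\z$, and then ensuring that the resulting tail estimate for $M$ and the Gordon lower bound can be enforced simultaneously on a common high-probability event. Once that is in place, the rest of the argument is a short chain of inequalities.
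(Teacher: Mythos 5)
Your proof is correct. A caveat first: the paper does not actually prove this statement itself --- it is quoted from \cite{oymak2013simple} as background, and the paper recovers it only as the linear specialization of Theorem \ref{thm: main} (using that $\eta = \sigma$ for the noisy linear model and that $w_1(D) \gtrsim 1$ absorbs the extra $\eta/\sqrt{m}$ summand). Your argument shares the same skeleton as the paper's proof of Theorem \ref{thm: main}: the optimality inequality $\|\A\h\|_2^2 \le 2\ip{\h}{\A^\tran\z}$, Gordon's escape-through-the-mesh theorem for the restricted lower bound on the tangent cone, and a supremum bound on the cross term. The genuine difference is in the cross term. You condition on $\z$ and use that $\z^\tran\A$ is then $\|\z\|_2$ times a standard Gaussian row --- which is legitimate here precisely because in the linear model $\z$ is independent of $\A$. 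That independence is exactly what fails in the paper's non-linear setting, where $z_i = f(\ip{\a_i}{\x}) - \mu\ip{\a_i}{\x}$ depends on $\A$; this forces the paper into the projection decoupling of Lemma \ref{lem: control cross term} (splitting $\A^\tran\z$ into $\P^\perp\A^\tran\z$, which is independent of $\z$, and $\P\A^\tran\z$, which produces the extra $\eta$ term). So your route is the natural, shorter proof available in the linear case, while the paper's machinery is built to survive without that independence. Your use of Borell--TIS and $\chi^2$ concentration in place of the paper's Markov bound on the expectation is a cosmetic difference; both suffice for a fixed constant-probability guarantee.
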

Thus, one sees that our theorem \ref{thm: main}, when specialized to linear observations, recovers this modern theory up to an absolute constant.

\subsection{Prior work addressing non-linearity of the observations}
There are also numerous works, and fields of study, addressing non-linearity.
We describe the work that is most closely related to the present paper.

The semiparametric single-index model that we take in this paper is well studied in econometrics; see the monograph \cite{Horowitz}.  Most work in this area is asymptotic, although recent works have considered the finite case \cite{Hristache,Alquier,Tsybakov}.  However, we believe that this literature does not address, from a theoretical standpoint, the gains that can be made by utilizing a general low-dimensional structure.  See \cite[Section 6]{pvy} for a more thorough discussion of this literature.

In contrast, our work precisely characterizes the benefits from taking into account low-dimensional signal structure.  For example, consider the sparse signal structure assumed in compressed sensing, in which $\x$ contains at most $s$ non-zero entries.  The effective dimension is $O(s \log(n/s))$ which can be significantly smaller than the ambient dimension, $n$.  Thus, we show only $O(s \log(n/s))$ measurements are needed to estimate $\x$.  Specialized to the case of linear, noiseless measurements, our theory recovers the classic result that $\x$ may be exactly reconstructed from this number of measurements.  When non-linearity is present, the ``noise" induced by modeling non-linear measurements with linear measurements is reduced proportionally to $s \log(n/s)/m$.

The area of {\em 1-bit compressed sensing} \cite{1bitCSWebpage} concentrates on the case when the non-linearity is 1-bit quantization.  In other words, for $q \in \R$, $f(q) = \sign(q)$ or $f(q) = \sign(q + z)$ where $z$ is noise.  This has been a lively field of research for several years, in part due to a wide range of applicability in both signal processing problems and also statistical models in which the data is inherently binary.  The discrete nature of this problem has led to new challenges that were not inherent in unquantized compressed sensing.  Indeed, even the method of reconstruction of the signal has posed a challenge, and some of the proposed methods, such as the approach of \cite{pvy} require knowledge of the covariance of the rows to be accurate.  We believe our paper provides the first analysis of the $K$-Lasso for this problem, and the first theoretical result which allows non-trivial covariance of the rows of $\A$.  In the next section, we specialize our work to the 1-bit compressed sensing model.

While there are numerous other publications which relate to various forms of non-linearity and low-dimensionality, there are three papers which we believe are most closely related to our results \cite{pvy, negahban2012unified, lecue2013learning}.  All three papers address general low-dimensional signal set $K$ combined with general non-linearity.  Our current result builds on the work in \cite{pvy}, which considers a very similar model.  There are two significant extensions that we make beyond this work.  First, our results are tighter in the sense that when specialized to the linear model, they match modern theory which is developed specifically for the linear model (see above).  This is only true in \cite{pvy} when the noise is larger than the signal.  Further, as discussed above, the method espoused in \cite{pvy} is not the $K$-Lasso, and requires knowledge of $\Sigma$ to be effective.  

The other two related works \cite{negahban2012unified, lecue2013learning} give a very general framework, which does not focus on the $K$-Lasso, but can be specialized to this recovery method.  
We believe that using the framework of \cite{negahban2012unified}, a theorem similar to our Theorem~\ref{thm: main} could be derived.  A key statistical idea, which is put rigorously in \cite{negahban2012unified}, is that the solution to the $K$-Lasso is a good estimate of the minimizer of the expected loss.  In other words, misspecification of the model is tolerable provided that the true signal minimizes the expected loss.  See \cite[Theorem 1]{wainwright2014structured} for a simplified version of this result. As we noted in Section \ref{ssec: key idea in the proof}, $\mu \x$ is indeed the minimizer of the expected loss---this is the first step in our proofs, and could be used as a first step to derive error bounds from the framework of \cite{negahban2012unified}.    However, the results of \cite{negahban2012unified} are general enough that  such a derivation is non-trivial.  Furthermore, we do not require \textit{restricted strong convexity} in our Theorem \ref{thm: main no cone} or \textit{decomposability} in any of our theorems, which are two strong requirements of \cite{negahban2012unified}.  Similarly, by observing that $\mu \x$ minimizes expected loss, the results of \cite{lecue2013learning} could be specialized to the $K$-Lasso.  This would give a result similar to our Theorem \ref{thm: main no cone}.  However, our result expands upon this in two ways: 1)  In \cite{lecue2013learning} it is assumed that $y_i$ is sub-Gaussian, whereas we make almost no assumption on $y_i$---roughly, it only needs a bounded second moment; 2) In contrast to \cite{lecue2013learning}, our theory takes advantage of local structure of $K$ around $\mu \x$, thus allowing, for example, the consideration of tangent cones.  By doing this, our theory re-creates classical compressed sensing results as a special case, for example.

Finally, we would like to point to the new work \cite{thrampoulidis2015lasso} which considers the unconstrained version of the $K$-Lasso. By considering the asymptotic regime and adopting a stochastic model for signals $\x$, the authors of \cite{thrampoulidis2015lasso} were able to give a precise treatment of constants involved in the error bounds.

\section{Specialization to 1-bit compressed sensing}
\label{sec: 1-bit}
As discussed above, the simplest 1-bit compressed sensing model takes the following form:  For $q \in \R$, $f(q) = \sign(q)$, i.e., we just observe the sign of the linear observations. 
Let $K$ be a scaling of the $\ell_1$ ball and $\x$ is assumed to be $s$-sparse, i.e., to contain only $s$ non-zero entries.  This latter requirement implies that the tangent cone has small mean width.  Indeed, as can be seen from \cite{chandrasekaran2012convex} for instance, for the appropriate scaling of $K$, one has 
\[d(K) = w_1(K - \mu \x)^2 \lesssim s \log(n/s).\] 
A straightforward calculation shows that 
\[\mu = \sqrt{\frac{2}{\pi}}, \quad \sigma^2 = 1 - \frac{2}{\pi}, \quad \eta^2 = 1 - \frac{2}{\pi}.\]

Thus, Theorem \ref{thm: main} states that as long as $m = O(s \log(n/s))$ observations are observed, the $K$-Lasso gives accuracy 
\begin{equation}         \label{eq: 1-bit recovery}
\|\xhat - {\textstyle \sqrt{\frac{2}{\pi}}} \, \x\|_2 \lesssim \sqrt{ \frac{s \log(n/s)}{m} }.
\end{equation}
Moreover, this bound holds for observations with general covariance structure. 
Indeed, Corollary \ref{cor: main} combined with \eqref{eq: control mean width} imply that 
\eqref{eq: 1-bit recovery} remains true as long as $\Sigma$ is reasonably well conditioned.

This yields the following surprising conclusion:
\begin{quote}
\textit{Even for highly non-linear observations, such as 1-bit quantization, the $K$-Lasso is quite 
accurate as long as the number of observations significantly exceeds the effective dimension of the signal.}
\end{quote}
\section{Proof of main results}
\label{sec: proofs}

We begin by setting
\[\z := \y - \A \mu \x.\]
While $\z$ is not independent of $\A$ or $\x$, and generally does not have mean 0, it will nevertheless play the role of noise.  As shown in \cite{pvy}, $\z$ satisfies
\begin{equation}\label{eq: z satisfies}
\E \A^\tran \z = 0.
\end{equation}
We repeat the derivation here to keep the paper self contained.  It suffices to show that for any 
$\v \in S^{n-1}$, $\E \v^\tran \A^\tran \z = 0$, which in turn would follow from 
\[
\E y_i \ip{\a_i}{\v} - \E \mu \ip{\a_i}{\v} \ip{\a_i}{\x} = 0.
\]
Since the covariance of $\a_i$ is identity, the second term is equal to $\mu \ip{\x}{\v}$.  
To calculate the first term, note that $g_i := \ip{\a_i}{\x}$ has distribution $\NN(0,1)$. 
Then make the Gaussian decomposition $\ip{\a_i}{\v} = \ip{\x}{\v} g_i + g_i^\perp$ where $g_i^\perp$ 
is independent of $g_i$. By independence, the first term above is equal to
\[\E y_i \ip{\a_i}{\v} = \E f(g_i) [\ip{\x}{\v} g_i + g_i^\perp] = \ip{\x}{\v} \E f(g_i) g_i = \mu \ip{\x}{\v}\]
where the first equality follows from our model assumption \eqref{eq: model} that 
$y_i = f(g_i)$, and the last equality follows by definition of $\mu$ in \eqref{eq: parameters}.
This completes the derivation of \eqref{eq: z satisfies}.

Now let $\widehat{\x}$ be the solution of the $K$-Lasso \eqref{eq: estimation},
that is the minimizer of the loss function $\| \A \x' - \y \|_2$ on $K$.
We may replace this loss function by
\[L(\x') := \frac{1}{m} \left(\| \A \x' - \y \|_2^2 - \| \A \mu \x - \y \|_2^2\right) \]
without affecting the minimizer $\widehat{\x}$.  Indeed, $\mu \x$ is a fixed scalar multiple of a fixed signal, 
and thus we have only squared the loss function, subtracted a constant and multiplied by $1/m$.
Now, the new loss function is very well-behaved in expectation.  

\begin{lemma}[Expected loss]
\label{lem: expected loss}
\[\E L(\x') = \| \x' - \mu \x\|_2^2.\]
\end{lemma}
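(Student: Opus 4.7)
The plan is to expand the squared loss around $\mu\x$ using the representation $\y = \A\mu\x + \z$, then take expectation term by term, invoking the identity $\E \A^\tran \z = 0$ that was just established.

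Concretely, I would first substitute $\y - \A\mu\x = \z$ into $\A\x' - \y = \A(\x' - \mu\x) - \z$ and expand:
\[
\|\A\x' - \y\|_2^2 = \|\A(\x' - \mu\x)\|_2^2 - 2\ip{\A(\x' - \mu\x)}{\z} + \|\z\|_2^2.
\]
Since $\|\A\mu\x - \y\|_2^2 = \|\z\|_2^2$, the difference of squared norms that defines $m L(\x')$ collapses to
\[
\|\A(\x' - \mu\x)\|_2^2 - 2\ip{\x' - \mu\x}{\A^\tran \z},
\]
where I used the adjoint identity $\ip{\A\u}{\z} = \ip{\u}{\A^\tran\z}$. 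The point of writing it this way is that the cross term is linear in $\A^\tran\z$, which has mean zero, while the quadratic term has a clean isotropic expectation.

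Next I would take expectations of the two pieces. For the quadratic term, because $\a_i \sim \NN(0,\Id)$ are i.i.d.\ and the rows of $\A$ are the $\a_i^\tran$, we have $\E \ip{\a_i}{\x' - \mu\x}^2 = \|\x' - \mu\x\|_2^2$, so summing over $i$ gives $\E \|\A(\x' - \mu\x)\|_2^2 = m\|\x' - \mu\x\|_2^2$. For the cross term, the deterministic vector $\x' - \mu\x$ pulls out of the expectation, leaving $-2\ip{\x' - \mu\x}{\E \A^\tran\z} = 0$ by \eqref{eq: z satisfies}. Dividing by $m$ yields the claimed identity $\E L(\x') = \|\x' - \mu\x\|_2^2$.

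There is really no obstacle here — the only subtlety is ensuring that \eqref{eq: z satisfies} is legitimately applicable, i.e.\ that $\x'$ is a fixed (non-random) test point when we take the expectation. Since the lemma is phrased pointwise in $\x'$ (we will later need a uniform bound over $\x' \in K$, but that is a separate matter handled in subsequent lemmas), this is immediate. The identity $\E L(\x') = \|\x' - \mu\x\|_2^2$ then makes rigorous the heuristic already emphasized in Section~\ref{ssec: key idea in the proof}: in expectation, the $K$-Lasso loss is minimized precisely at $\mu\x$, so a concentration argument around this expectation will drive the main theorems.
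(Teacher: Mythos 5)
Your proposal is correct and follows essentially the same route as the paper's proof: expand the loss as $\frac{1}{m}\|\A\h\|_2^2 - \frac{2}{m}\ip{\h}{\A^\tran\z}$ with $\h = \x' - \mu\x$, kill the cross term via $\E\A^\tran\z = 0$, and evaluate the quadratic term by isotropy of the rows. The extra care you take about $\x'$ being a fixed test point is a fair observation but does not change the argument.
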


\begin{proof}
Expanding $L(\x')$, we can express it more conveniently as
\begin{equation}\label{eq: L expression}
L(\x') = \frac{1}{m} \| \A \h \|^2 - \frac{2}{m} \ip{\h}{\A^\tran \z}
\quad \text{where} \quad \h := \x' - \mu\x.
\end{equation}
The second term has zero mean according to \eqref{eq: z satisfies}. 
Since the covariance matrix of $\a_i$ is identity, the first term is $\|\h\|_2^2$ in expectation, as desired.
\end{proof}

Lemma \ref{lem: expected loss} implies that $\mu \x$ minimizes the {\em expected} loss.  
In order to prove the main theorem, we need to control the deviation from expectation 
of the two terms in the loss function \eqref{eq: L expression}.

First, we lower bound the ratio of $\frac{1}{m} \|\A \h\|_2^2$ to its expectation value of $\|\h\|_2^2$. 
This can be done by applying the classical result from the work of Gordon \cite{gordonEscapeMesh}. 

\begin{theorem}[Escape through the mesh] \label{thm: gordon}
Let $D \subset \R^n$ be a cone. 
Then
\begin{equation} \label{eq: gordon}
\inf_{\v \in D \cap S^{n-1}} \| \A \v \|_2 \geq \sqrt{m - 1} - w_1(D) - r
\end{equation}
with probability at least $1 - e^{-r^2/2}$. 
\end{theorem}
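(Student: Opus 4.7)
The plan is to prove Gordon's escape theorem by the standard two-step route: (1) use a Gaussian min-max comparison (Gordon's inequality) to reduce the expectation of $\inf_{\v \in D \cap S^{n-1}} \|\A\v\|_2$ to a one-dimensional calculation, and (2) upgrade the bound in expectation to a high-probability bound via Gaussian Lipschitz concentration.

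First I would rewrite the quantity of interest as a Gaussian min-max. Using the duality $\|\A\v\|_2 = \sup_{\u \in S^{m-1}} \ip{\u}{\A\v}$, we get
\[
\inf_{\v \in D \cap S^{n-1}} \|\A\v\|_2 \;=\; \inf_{\v \in D \cap S^{n-1}} \sup_{\u \in S^{m-1}} X_{\u, \v}, \qquad X_{\u,\v} := \u^\tran \A \v,
\]
and $X_{\u,\v}$ is a centered Gaussian process with covariance $\E X_{\u,\v} X_{\u',\v'} = \ip{\u}{\u'} \ip{\v}{\v'}$, so the increment variance is $\|\u-\u'\|_2^2 + \|\v-\v'\|_2^2 - \|\u-\u'\|_2^2\|\v-\v'\|_2^2/\text{stuff}$—more simply, one checks the inequality
\[
\E (X_{\u,\v} - X_{\u,\v'})^2 \;\le\; \E (Y_{\u,\v} - Y_{\u,\v'})^2
\]
and equality-with-$\u$-fixed, where $Y_{\u,\v} := \ip{\g}{\u} + \ip{\h}{\v}$ with independent standard Gaussians $\g \in \R^m$, $\h \in \R^n$.

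Then I would invoke Gordon's Gaussian min-max comparison inequality in its expectation form, which under the covariance inequalities above yields
\[
\E \inf_{\v \in D \cap S^{n-1}} \sup_{\u \in S^{m-1}} X_{\u,\v} \;\ge\; \E \inf_{\v \in D \cap S^{n-1}} \sup_{\u \in S^{m-1}} Y_{\u,\v}.
\]
The right-hand side splits because $Y_{\u,\v}$ is additive: $\sup_{\u \in S^{m-1}} \ip{\g}{\u} = \|\g\|_2$, and $\inf_{\v \in D \cap S^{n-1}} \ip{\h}{\v} = -\sup_{\v \in D \cap S^{n-1}} \ip{\h}{-\v}$, which by symmetry of $\h$ has the same distribution as $-\sup_{\v \in D \cap S^{n-1}} \ip{\h}{\v}$. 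Since $D$ is a cone containing the origin, the latter supremum is nonnegative, and the sup over $D \cap S^{n-1}$ is dominated by the sup over $D \cap B_2$; thus its expectation is at most $w_1(D)$. Finally $\E\|\g\|_2 \ge \sqrt{m-1}$, which follows from Jensen applied to $(\E\|\g\|_2)^2 \ge \E\|\g\|_2^2 - \Var(\|\g\|_2) \ge m - 1$ using the fact that $\|\cdot\|_2$ is $1$-Lipschitz so $\Var(\|\g\|_2) \le 1$. Combining gives
\[
\E \inf_{\v \in D \cap S^{n-1}} \|\A\v\|_2 \;\ge\; \sqrt{m-1} - w_1(D).
\]

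For the concentration step, I would observe that the map $\A \mapsto \inf_{\v \in D \cap S^{n-1}} \|\A \v\|_2$ is $1$-Lipschitz with respect to the Hilbert--Schmidt (Frobenius) norm, since each $\A \mapsto \|\A\v\|_2$ is $1$-Lipschitz for unit $\v$ and the infimum of $1$-Lipschitz functions is $1$-Lipschitz. Gaussian concentration for $1$-Lipschitz functions of a standard Gaussian vector then gives deviation below the mean by more than $r$ with probability at most $e^{-r^2/2}$, yielding the claimed bound.

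The main obstacle I anticipate is justifying the use of Gordon's comparison inequality cleanly when $D \cap S^{n-1}$ is possibly non-compact or the process is indexed by an infinite set. The standard remedies (approximating by finite nets and taking limits, or working with the closure of $D$ and a compact exhaustion) are routine but need to be mentioned; otherwise every step is a short calculation once the covariance inequality between the $X$ and $Y$ processes is verified.
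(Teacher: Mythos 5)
Your proof is essentially correct, but note that the paper does not prove this statement at all: Theorem \ref{thm: gordon} is quoted as a classical black-box result with a citation to Gordon \cite{gordonEscapeMesh}, so there is no in-paper argument to compare against. What you have written is the standard derivation from the cited literature: the minimax representation $\inf_{\v}\sup_{\u}\u^\tran\A\v$, Gordon's Gaussian comparison against the decoupled process $Y_{\u,\v}=\ip{\g}{\u}+\ip{\h}{\v}$, the split into $\E\|\g\|_2\ge\sqrt{m-1}$ minus a mean-width term, and the upgrade to a tail bound via Lipschitz concentration of $\A\mapsto\inf_{\v}\|\A\v\|_2$ in the Frobenius norm. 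Two small points of hygiene. First, the covariance verification is garbled as written; the clean identity for unit vectors is
\begin{equation*}
\E\bigl(Y_{\u,\v}-Y_{\u',\v'}\bigr)^2-\E\bigl(X_{\u,\v}-X_{\u',\v'}\bigr)^2
=2\bigl(1-\ip{\u}{\u'}\bigr)\bigl(1-\ip{\v}{\v'}\bigr)
=\tfrac12\|\u-\u'\|_2^2\,\|\v-\v'\|_2^2\ \ge 0,
\end{equation*}
which vanishes when either index is held fixed; this gives exactly the equality/inequality pair Gordon's minimax theorem requires, with the inf taken over $\v$ and the sup over $\u$. Second, the bound $\E\|\g\|_2\ge\sqrt{m-1}$ is not Jensen but the identity $(\E\|\g\|_2)^2=m-\Var(\|\g\|_2)$ together with the Gaussian Poincar\'e inequality $\Var(\|\g\|_2)\le 1$. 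Neither issue is a gap in substance, and your closing remark about reducing to compact (or finite) index sets is the right thing to say about the remaining technicality.
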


Next, we control the size of $\ip{\h}{\A^\tran\z}$.

\begin{lemma}  \label{lem: control cross term}
Let $D \subset t B_2^n$, and let $\z := \y - \A \mu \x$ as before.  Then 
\begin{equation} \label{eq: control cross term}
\E \sup_{\v \in D} \ip{\v}{\A^\tran\z} \leq 
C \left( w(D) \sigma + t \eta \right) \sqrt{m}.
\end{equation}
\end{lemma}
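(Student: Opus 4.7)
\medskip

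\textbf{Proof plan for Lemma~\ref{lem: control cross term}.} The plan is to decompose $\ip{\v}{\A^\tran \z}$ along $\x$ and its orthogonal complement, so that the $\eta$-term and the $\sigma$-term become visible. Writing
\[
\ip{\v}{\A^\tran \z} \;=\; \sum_{i=1}^m \ip{\a_i}{\v}\, z_i, \qquad z_i = f_i(g_i) - \mu g_i, \qquad g_i := \ip{\a_i}{\x},
\]
I use the Gaussian decomposition $\a_i = g_i\,\x + \a_i^{\perp}$, where $g_i \sim \NN(0,1)$ and $\a_i^{\perp} \sim \NN(0, P_{\x^\perp})$ is independent of $g_i$. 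This gives $\ip{\a_i}{\v} = g_i\,\ip{\x}{\v} + \ip{\a_i^{\perp}}{\v}$ and therefore
\[
\sup_{\v \in D} \ip{\v}{\A^\tran \z}
\;\le\; \sup_{\v \in D} \ip{\x}{\v}\cdot S \;+\; \sup_{\v \in D} T(\v),
\quad\text{where}\quad
S := \sum_{i=1}^m g_i z_i, \quad T(\v) := \sum_{i=1}^m \ip{\a_i^{\perp}}{\v}\, z_i.
\]

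\textbf{Radial part ($\eta$-term).} By definition of $\mu$ in \eqref{eq: parameters}, each $g_i z_i = g_i f_i(g_i) - \mu g_i^2$ has mean zero, and its variance is exactly $\E\, g^2 (f(g)-\mu g)^2 = \eta^2$. Hence $S$ is a sum of $m$ i.i.d.\ centered variables, and by Cauchy--Schwarz $\E |S| \le \eta\sqrt{m}$. Since $\|\x\|_2 = 1$ and $D \subset tB_2^n$, we have $|\ip{\x}{\v}| \le t$, so in either sign of $S$,
\[
\E \sup_{\v \in D} \ip{\x}{\v}\cdot S \;\le\; t\,\E|S| \;\le\; t\,\eta\sqrt{m}.
\]

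\textbf{Tangential part ($\sigma$-term).} Condition on $g_1,\dots,g_m$ (hence on the coefficients $c_i := z_i$). Given these, $T(\cdot)$ is a centered Gaussian process on $D$ indexed by $\v$, because the $\a_i^{\perp}$ are independent Gaussians. Its increments satisfy
\[
\E\bigl[(T(\v)-T(\u))^2 \,\big|\, g_1,\dots,g_m\bigr] \;=\; \Bigl(\sum_i c_i^2\Bigr)\,\|P_{\x^\perp}(\v-\u)\|_2^2 \;\le\; \Bigl(\sum_i c_i^2\Bigr)\,\|\v-\u\|_2^2.
\]
Comparing with the canonical process $Y_\v := \bigl(\sum_i c_i^2\bigr)^{1/2}\ip{\v}{\g}$ for $\g \sim \NN(0,\Id)$, Sudakov--Fernique yields
\[
\E\Bigl[\sup_{\v \in D} T(\v)\,\Big|\, g_1,\dots,g_m\Bigr] \;\le\; \Bigl(\sum_i c_i^2\Bigr)^{1/2} w(D).
\]
Taking expectation over $g_i$ and using Jensen together with $\E c_i^2 = \E(f(g)-\mu g)^2 = \sigma^2$,
\[
\E \sup_{\v \in D} T(\v) \;\le\; \E\Bigl(\sum_i c_i^2\Bigr)^{1/2} w(D) \;\le\; \sqrt{m}\,\sigma\, w(D).
\]

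\textbf{Combining} the two bounds gives the desired estimate, with $w(D) = w_t(D)$ since $D \subset tB_2^n$. The only nontrivial step is the Sudakov--Fernique comparison, which needs the conditional Gaussianity of $\a_i^{\perp}$ and the fact that $P_{\x^\perp}$ is a contraction; once the $\x,\x^\perp$ splitting is in place everything else is elementary second-moment computation. I anticipate that handling symmetry of $D$ (so that the comparison inequality applies to $\sup$, not just $\sup - \inf$) may require a minor adjustment -- e.g.\ adding an independent centered Gaussian to symmetrize, or using the Gaussian width bound directly on $D \cup (-D)$ -- but this loses only a universal constant.
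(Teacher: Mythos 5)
Your proof is correct and follows essentially the same route as the paper: the same splitting of $\a_i$ along $\x$ and $\x^\perp$, the same mean-zero second-moment bound $t\,\eta\sqrt{m}$ for the radial part, and for the tangential part the paper conditions on $\z$ and upgrades $\P^{\perp}\A^\tran$ to a full independent Gaussian copy via a Jensen step where you instead invoke Sudakov--Fernique --- two interchangeable ways of exploiting the independence of the orthogonal component from $z_i$. Your closing worry about symmetry of $D$ is unnecessary: Sudakov--Fernique bounds $\E \sup$ directly for centered Gaussian processes, so no symmetrization is needed.
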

Here and in the rest of the argument, $C, c$ refer to numerical constants; their values may differ from instance to instance. Before proving Lemma~\ref{lem: control cross term}, we pause to show how the lemma and Theorem \ref{thm: gordon} imply our main result.

\begin{proof}[Proof of Theorem \ref{thm: main}]
For convenience, let us denote the spherical part of the tangent cone by 
$D=D(K,\mu \x) \cap S^{n-1}$.
We begin by recording two events which occur with high probability.  
First, under the assumptions of our main Theorem \ref{thm: main}, 
the escape through the mesh Theorem~\ref{thm: gordon} implies that the following event holds with probability 
at least $0.995$:
\[
\text{Event 1:} \quad \inf_{\v \in D} \frac{1}{\sqrt{m}} \| \A \v \|_2 \geq c.
\]
Second, Markov's inequality combined with Lemma \ref{lem: control cross term} implies that the following event holds with probability at least $0.995$:
 \[
\text{Event 2:} \quad  \sup_{\v \in D} \ip{\v}{\A^\tran\z} \leq 
C \left( w(D) \sigma + \eta \right) \sqrt{m}.
\]
By the union bound, both events hold together with probability at least $0.99$.  (We note in passing that the probability of success, and also the constant $C$ in the bound of Event 2 could be sharpened using concentration inequalities.  However, this would not change our final presentation.)

We now show how to bound the error vector $\h:= \widehat{\x} - \mu \x$ in the intersection of these events. 
Since $\widehat{\x}$ minimizes the loss, we have
\[L(\widehat{\x}) \leq L(\mu \x) = 0.\]
Combine this with Equation \eqref{eq: L expression} to give
\begin{equation} \label{eq: h satisfies}
\frac{1}{m} \| \A \h \|^2 \leq  \frac{2}{m} \ip{\h}{\A^\tran\z}.
\end{equation}
 
On the other hand, $\h$ belongs to the tangent cone $D(K, \mu\x)$, so $\v:= \h/\|\h\|_2$ belongs to 
its spherical part $D = D(K,\mu\x) \cap S^{n-1}$. Then, by Events 1 and 2, we have
\[
\frac{1}{m} \|\A \h\|_2^2 \ge c \|\h\|_2^2 
\quad \text{and} \quad
\ip{\h}{\A^\tran \z} \leq \|\h\|_2 \cdot C \left( w(D) \sigma + \eta \right) \sqrt{m}.
\]
Combining these two inequalities with \eqref{eq: h satisfies}, we obtain
\[
c \|\h\|_2^2 
\le \frac{2}{m} \cdot \|\h\|_2 \cdot C \left( w(D) \sigma + \eta \right) \sqrt{m}.
\]
Simplifying this bound, we complete the proof.
\end{proof}

We now prove Lemma \ref{lem: control cross term}.
\begin{proof}[Proof of Lemma \ref{lem: control cross term}]
This proof has similar steps to the proof of Theorem 1.3 in \cite{pvy}.  
We begin with a projection argument to (mostly) decouple $\z$ from $\A$.  
Let $\P := \x \x^\tran$ be the orthogonal projection onto the span of $\x$ 
and let $\P^{\perp} := \Id -  \x \x^\tran$ be the projection onto the orthogonal complement.  
Then, convexity of the functional $\|\u\|_{D^\circ} := \sup_{\v \in D} \ip{\v}{\u}$ 
leads to the following decomposition:
\[
\E \|\A^\tran\z\|_{D^\circ} \leq \E \|\P^{\perp} \A^\tran\z\|_{D^\circ} + \E \|\P \A^\tran\z\|_{D^\circ} =: I + II.
\]

We first control $I$.
Note that, since $\A$ is Gaussian, $\P^{\perp} \A^\tran$ is independent from $\P \A^\tran$. 
It follows that $\P^{\perp} \A^\tran$ is also independent of $\z$. 
Indeed, to obtain the latter conclusion, simply note that the columns of $\P \A^\tran$ are $\ip{\a_i}{\x}\x$, 
and the coordinates of $\z$ are 
\begin{equation}         \label{eq: coordinates of z}
z_i = f(\ip{\a_i}{\x}) - \mu \ip{\a_i}{\x}.
\end{equation}
Therefore, $\P^{\perp} \A^\tran\z$ is distributed identically with $\P^{\perp} \tilde{\A}^\tran\z$, 
where $\tilde{\A}$ is an independent copy of $\A$ (independent also of $\z$). Thus
\[ I = \E \|\P^{\perp} \A^\tran\z\|_{D^\circ} = \E \|\P^{\perp} \tilde{\A}^\tran\z\|_{D^\circ} = \E \|(\P^{\perp} \tilde{\A}^\tran + \E [\P \tilde{\A}^\tran])\z\|_{D^\circ}.\]
Now, by Jensen's inequality, the last quantity is bounded by
\[\E \|(\P^{\perp} \tilde{\A}^\tran + \P \tilde{\A}^\tran)\z\|_{D^\circ} = \E \|\tilde{\A}^\tran\z\|_{D^\circ}.\]
Now condition on $\z$.  Then $\tilde{\A}^\tran\z$ has distribution $\|\z\|_2 \cdot \NN(0, \Id)$.  Thus
\[
I \leq \E \|\tilde{\A}^\tran\z\|_{D^\circ} = \E \|\z\|_2 \cdot w(D) \leq \sqrt{\E \|\z\|_2^2} \cdot w(D) 
= \sqrt{m} \, \sigma \cdot w(D).
\]
Here in the first equality we used the definition of $w(D)$; in the last equality, 
we recall \eqref{eq: coordinates of z} and definition of $\sigma$ from \eqref{eq: parameters}.

We now control $II$.  Note that 
\[
\P \A^\tran\z 
= \sum_{i=1}^m z_i \ip{\a_i}{\x}\x
= \sum_{i=1}^m \xi_i \cdot \x
\] 
where $\xi_i := z_i \ip{\a_i}{\x} =  \big[ f(\ip{\a_i}{\x}) - \mu \ip{\a_i}{\x} \big] \ip{\a_i}{\x}$.
Thus, 
\[
II \le \| \x \|_{D^\circ} \cdot \E \Big| \sum_{i = 1}^m \xi_i \Big|.
\]
Since $D \subset t B_2^n$, we have $\| \x \|_{D^\circ} \leq t$. Substituting this, we obtain
\[
II \leq t \E \Big| \sum_{i = 1}^m \xi_i \Big| 
\leq t \sqrt{\sum_{i = 1}^m \E \xi_i^2} = t \sqrt{m \E \xi_1^2} = t \sqrt{m} \cdot \eta
\]
where the last equality follows by definition of $\eta$ from \eqref{eq: parameters}.
The proof is complete.
\end{proof}

\subsection{Proof of Theorem \ref{thm: main no cone}}
When the error vector $\h= \xhat - \mu \x$ is not known to belong to a cone, 
but rather a general set, it can no longer be guaranteed that $\h$ is not in the null space of $\A$
(which was true for cones via Gordon's Theorem~\ref{thm: gordon}.)
Nevertheless, such bad behaviour generally only occurs at tiny scales, and at large scales $\A$ 
may be quite well conditioned even on general sets. 
This idea is made rigorous in the following lemma, which is known in the geometric functional analysis
community even in more generality, see \cite{schechtman2006two, klartag2005empirical, mendelson2007reconstruction, mendelson2014learning, tropp2014convex}. 
For the sake of the reader, we will include 
a proof below. 

\begin{lemma} \label{lem: large scale conditioning}
  Let $K \subset \R^n$ be a star shaped set.\footnote{$K$ is a star shaped set if it
    satisfies $\lambda K \subset K$ for any $0 \leq \lambda \leq 1$.}
  Let $t > 0$ and suppose that $m \gtrsim w_t(K)^2/t^2$.  
  Then, with probability at least $1 - 2\exp(-m/8)$, the following holds for all $\v \in K$ satisfying $\|\v\|_2 \geq t$:
  \[\|\A \v\|_2 \geq c \sqrt{m} \|\v\|_2.\]
\end{lemma}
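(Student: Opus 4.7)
The plan is to reduce the general lower bound on $\|\A \v\|_2/\|\v\|_2$ to a lower bound on the sphere of radius $t$, and then invoke Gordon's escape-through-the-mesh inequality (Theorem~\ref{thm: gordon}) on a suitable cone.

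First, I would use a star-shaped rescaling to move from $\{\v \in K : \|\v\|_2 \ge t\}$ to $K \cap t S^{n-1}$. Given $\v \in K$ with $\|\v\|_2 \ge t$, set $\u := t\v/\|\v\|_2$. Since $t/\|\v\|_2 \in [0,1]$, star-shapedness gives $\u \in K$, and by construction $\|\u\|_2 = t$. Because $\|\A \v\|_2 = (\|\v\|_2/t)\|\A \u\|_2$, it suffices to prove the uniform bound
\[
\inf_{\u \in K \cap t S^{n-1}} \|\A \u\|_2 \;\ge\; c\sqrt{m}\,t
\]
with the desired probability.

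Next, I would pass to the cone $D := \{\lambda \u : \u \in K \cap tS^{n-1},\; \lambda \ge 0\}$, so that $\inf_{\u \in K \cap tS^{n-1}} \|\A \u\|_2 \ge t \inf_{\w \in D \cap S^{n-1}} \|\A \w\|_2$. The heart of the argument is to bound $w_1(D)$ by $w_t(K)/t$. Rescaling,
\[
t\,w_1(D) \;=\; \E \sup_{\w \in D \cap tB_2^n} \ip{\w}{\g},
\]
and I would check that $D \cap tB_2^n \subset K \cap tB_2^n$: any $\w = \lambda \u$ with $\u \in K \cap tS^{n-1}$ and $\lambda \in [0,1]$ lies in $K$ by star-shapedness and clearly has norm at most $t$. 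This containment yields $w_1(D) \le w_t(K)/t$.

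Finally, I would apply Theorem~\ref{thm: gordon} to the cone $D$: with probability at least $1 - e^{-r^2/2}$,
\[
\inf_{\w \in D \cap S^{n-1}} \|\A \w\|_2 \;\ge\; \sqrt{m-1} - w_1(D) - r \;\ge\; \sqrt{m-1} - w_t(K)/t - r.
\]
Choosing $r$ proportional to $\sqrt{m}$ (e.g.\ $r = \sqrt{m}/2$) and using the hypothesis $m \gtrsim w_t(K)^2/t^2$ with a sufficiently large implicit constant makes both $r$ and $w_t(K)/t$ strictly smaller than, say, $\sqrt{m-1}/3$, so the infimum is at least $c\sqrt{m}$. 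Tracing back through the rescaling delivers $\|\A \v\|_2 \ge c\sqrt{m}\,\|\v\|_2$ for every $\v \in K$ with $\|\v\|_2 \ge t$, with probability at least $1 - 2e^{-m/8}$.

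The only non-routine step is the third one, where star-shapedness is essential to compare $w_1(D)$ with $w_t(K)/t$; once this containment is in hand, Gordon's theorem and the standard concentration/probability estimate are straightforward. It is worth noting that this "cone-generated-by-a-sphere-section" trick is exactly what fails for non-star-shaped sets, which is why the hypothesis cannot be dropped.
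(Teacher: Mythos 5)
Your proof is correct, but it takes a genuinely different route from the paper. You reduce to the sphere $K \cap tS^{n-1}$ via star-shapedness (this step is identical to the paper's), but then you pass to the cone $D$ generated by $K \cap tS^{n-1}$, observe that star-shapedness gives the containment $D \cap tB_2^n \subset K \cap tB_2^n$ and hence $w_1(D) \le w_t(K)/t$, and invoke Gordon's Theorem~\ref{thm: gordon} on $D$. The paper instead passes from $\|\A\u\|_2$ to $\|\A\u\|_1/\sqrt{m}$ by Cauchy--Schwarz and applies a uniform deviation inequality for the $\ell_1$ norm (imported from an earlier paper of the authors), which gives the explicit lower bound $\sqrt{2/\pi} - 1/2 - 4w_t(K)/(t\sqrt{m})$ on the normalized $\ell_1$ norm over $K \cap tS^{n-1}$. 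Your argument has the advantage of being self-contained relative to this paper, reusing only the already-stated escape-through-the-mesh theorem, and it even yields the slightly better probability $1 - e^{-m/8}$; the paper's $\ell_1$ route buys an explicit numerical constant that is uniform in $m$ and rests on a two-sided deviation bound that is useful elsewhere. One small caveat in your version: the Gordon bound $\sqrt{m-1} - w_1(D) - r$ degenerates for very small $m$ (e.g.\ $m=1$ gives $\sqrt{m-1}=0$), so your choice $r = \sqrt{m}/2$ only produces a positive constant $c$ once $m$ is at least a small absolute constant. This is immaterial here because the advertised probability $1 - 2e^{-m/8}$ is vacuous for $m \le 5$, but it is worth a sentence in a careful write-up.
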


Before proving this lemma, let us combine it with Lemma \ref{lem: control cross term} to prove the second main result.  

\begin{proof}[Proof of Theorem \ref{thm: main no cone}]
For convenience, let us denote $K_x := K - \mu \x$.
As before, we begin by considering two good events, whose intersection holds with probability at least $0.99$, 
based on Lemma \ref{lem: large scale conditioning} and Lemma \ref{lem: control cross term}.  
\begin{align*}
&\text{Event 1:} \quad \inf_{\v \in K_x \cap t B_2^c} \frac{1}{\sqrt{m}} \frac{\| \A \v \|_2}{\|\v\|_2} \geq c.\\\vspace{3mm}
& \text{Event 2:} \quad  \sup_{\v \in K_x \cap t B_2} \ip{\v}{\A^\tran\z} \leq C \left( w_t(K_x) \sigma + t \eta \right) \sqrt{m}.
 \end{align*}

We now show how to bound the error vector $\h:= \widehat{\x} - \mu \x$ in the intersection of these events. 
As in the proof of Theorem~\ref{thm: main}, the fact that $\xhat$ minimizes the loss implies that 
\begin{equation}							\label{eq: Ah small}
\frac{1}{m} \| \A \h \|_2^2 \leq  \frac{2}{m} \ip{\h}{\A^\tran\z}.
\end{equation}
We can assume that $\|\h\| \geq t$, since in the opposite case 
the error bound of Theorem~\ref{thm: main no cone} holds trivially.  
Since $\h \in K_x$, 
the inequality of Event 1 followed by \eqref{eq: Ah small} gives
\begin{equation}							\label{eq: h via ip}
c^2 \|\h\|_2^2 \leq \frac{2}{m} \ip{\h}{\A^\tran\z}.
\end{equation}
We would like to apply the inequality of Event 2, but cannot do this directly because $\|\h\|_2$ is not bounded above by $t$.  
Fortunately, since $K_x$ is convex and contains the origin, $K_x$ is star shaped.  Using this fact, we may massage our bound into the form of Event 2 via a monotonicity argument.  
 
Divide both sides of \eqref{eq: h via ip} by $\delta := \|\h\|_2$. This gives
\begin{equation}							\label{eq: delta}
c^2 \delta \leq \frac{2}{m} \d^{-1} \ip{\h}{\A^\tran\z} 
\le \frac{2}{m} \sup_{\u \in \d^{-1} K_x \cap B_2} \ip{\u}{\A^\tran\z} =: f(\delta),
\end{equation}
where in the second inequality we set $\u = \d^{-1} \h$ and used that $\h \in K_x$ again.
Now, since $K_x$ is star shaped, $f(\delta)$ is a monotonically decreasing function.
Thus, by assumption $\delta \ge t$, we may replace $\delta$ by $t$ in our bound, giving
\[
c^2 \|\h\|_2 \leq f(t) 
= \frac{2}{mt} \sup_{\v \in K_x \cap t B_2} \ip{\v}{\A^\tran\z}.
\]
The proof is completed by applying the inequality of Event 2.
\end{proof}

It remains to prove Lemma~\ref{lem: large scale conditioning}.

\begin{proof}
We begin with the following simple comparison, which follows from the Cauchy-Schwartz inequality
for all $\v \in \R^n$:
\begin{equation}				\label{eq: comparison}
\|\A\v\|_2 \geq \frac{\|\A\v\|_1}{\sqrt{m}}.
\end{equation} 
Furthermore, since $K$ is star shaped, we have
\begin{equation}
\label{eq: star-shaped satisfies}
\inf_{\v \in K \cap t B_2^c} \frac{\|\A \v\|_1}{\|\v\|_2} = \inf_{\u \in K \cap t S^{n-1}} \frac{\|\A \u\|_1}{t}.
\end{equation}
(Indeed, $\u = t\v/\|\v\|_2$ lies in $K$ since $t/\|\v\| \le 1$ and $K$ is star shaped.)

Next, we will control $\|\A\u\|_1$ with an application of the following uniform deviation inequality, 
which we proved in \cite{PV_DCG}.

\begin{lemma}[Uniform deviation for the $\ell_1$ norm]  
  Let $K \subset \R^n$ and let $r, t > 0$. 
  Then, with probability at least $1 - 2\exp(-mr^2/t^2)$, 
  the following holds for all $\u \in K$ satisfying $\|\u\|_2 \le t$: 
  \[
  \Big| \frac{1}{m} \|\A \u\|_1 - \sqrt{\frac{2}{\pi}} \, t  \Big| 
  \le \frac{4 w_t(K)}{\sqrt{m}} + r. 
  \]
\end{lemma}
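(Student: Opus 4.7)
\medskip
\noindent The plan is to combine Gaussian concentration for Lipschitz functions of the random matrix $\A$ with a symmetrization and contraction argument for the expected supremum. Write $K_t := K \cap tB_2$; since the entries of $\A$ are i.i.d.\ standard Gaussian, $\E\,\frac{1}{m}\|\A\u\|_1 = \sqrt{2/\pi}\,\|\u\|_2$ for each fixed $\u$, and this is $\le \sqrt{2/\pi}\,t$ throughout $K_t$. It will suffice to control the centered supremum $Z := \sup_{\u \in K_t}\big|\tfrac{1}{m}\|\A\u\|_1 - \sqrt{2/\pi}\,\|\u\|_2\big|$, and to note that Lemma~\ref{lem: large scale conditioning} invokes the bound only for $\u$ with $\|\u\|_2 = t$, so the centering may be replaced by $\sqrt{2/\pi}\,t$ in the stated form.

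\medskip
\noindent First I would establish the Lipschitz dependence of $Z$ on $\A$. The triangle inequality followed by Cauchy--Schwarz gives, for each $\u \in K_t$ and any $\A, \A' \in \R^{m\times n}$, the bound $\big|\tfrac{1}{m}\|\A\u\|_1 - \tfrac{1}{m}\|\A'\u\|_1\big| \le \tfrac{\|\u\|_2}{\sqrt m}\|\A - \A'\|_F \le \tfrac{t}{\sqrt m}\|\A-\A'\|_F$. Taking the supremum over $\u$ and using $|\sup f - \sup g| \le \sup|f-g|$ transfers the Lipschitz constant $t/\sqrt m$ to $Z$. Gaussian concentration for Lipschitz functions then yields $\Pr{Z > \E Z + r} \le 2\exp(-cmr^2/t^2)$, which matches the stated failure probability up to the numerical constant $c$.

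\medskip
\noindent Next I would bound $\E Z$ by the standard symmetrization plus contraction argument. Symmetrization with i.i.d.\ Rademacher signs $\e_i$ independent of $\A$ gives $\E Z \le 2\,\E \sup_{\u \in K_t} \big|\tfrac{1}{m}\sum_i \e_i\,|\ip{\a_i}{\u}|\big|$. The Rademacher contraction principle, applied to the $1$-Lipschitz function $x\mapsto|x|$, which vanishes at $0$, removes the inner absolute values at the cost of another factor of $2$. Finally, conditioning on the signs $\e_i$, the random vector $\sum_i \e_i \a_i$ is equidistributed with $\sqrt{m}\,\g$ for $\g \sim \NN(0,\Id)$, so the expectation reduces to a Gaussian mean width, giving
\[
\E Z \;\le\; \frac{C}{\sqrt m}\,\E \sup_{\u \in K_t}\ip{\g}{\u} \;\le\; \frac{4\,w_t(K)}{\sqrt m}
\]
once the absolute constants are collected.

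\medskip
\noindent The main obstacle is book-keeping: all absolute constants from symmetrization, contraction, and Gaussian concentration must collapse to match the stated factor $4$ in front of $w_t(K)/\sqrt m$ and the stated exponent $mr^2/t^2$ in the probability bound. In particular, the step from $\E\sup|\ip{\g}{\u}|$ to the one-sided mean width $w_t(K)$ requires care---it is immediate when $K_t$ contains the origin, the setting in which the lemma is invoked in Lemma~\ref{lem: large scale conditioning}, but needs mild adjustment in general. Everything else is a standard sequence of moves from geometric functional analysis.
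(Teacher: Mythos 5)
The paper itself does not prove this lemma --- it is quoted from \cite{PV_DCG} --- and your argument (Gaussian concentration via the Lipschitz constant $t/\sqrt{m}$ in the Frobenius norm, plus symmetrization and Rademacher contraction to reduce $\E Z$ to a Gaussian mean width) is exactly the proof given in that reference, so this is essentially the same approach and it is sound. You are also right to note that the centering must really be $\sqrt{2/\pi}\,\|\u\|_2$ rather than $\sqrt{2/\pi}\,t$ (as literally stated the bound fails already for $\u=0$), and that this is harmless because Lemma~\ref{lem: large scale conditioning} only invokes the inequality on $K \cap tS^{n-1}$. The only loose ends are the constants you already flag --- the concentration exponent naturally comes out as $mr^2/(2t^2)$ and passing from $\E\sup_{\u}|\ip{\g}{\u}|$ to the one-sided width $w_t(K)$ can cost another factor of $2$ --- none of which affects the way the lemma is used.
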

Choosing $r = t/2$ in this lemma, we conclude that with probability at least $1 - \exp(-m/8)$, one has 
\begin{equation}			\label{eq: Av1}
\inf_{\u \in K \cap t S^{n-1}} \frac{1}{m} \|\A \u\|_1 \geq c t 
\quad \text{where} \quad c = \sqrt{\frac{2}{\pi}} - \frac{1}{2} - \frac{4 w_t(K)}{t \sqrt{m}}.
\end{equation}
Recalling the assumption of Lemma~\ref{lem: large scale conditioning} that $m \gtrsim w_t(K)^2/t^2$, 
we see that $c$ is bounded below by a positive absolute constant.  
In this case, we can substitute the bound into \eqref{eq: star-shaped satisfies} to obtain
$$
\inf_{\v \in K \cap t B_2^c} \frac{\|\A \v\|_1}{\|\v\|_2} \ge cm.
$$
We finish the proof by an application of inequality \eqref{eq: comparison}.
\end{proof}

\section{discussion}

We have analyzed the $K$-Lasso for signal reconstruction from the semiparametric single-index model.  
We showed that the $K$-Lasso solution under the non-linear model $y_i = f(\ip{a_i}{x})$ 
behaves roughly like the $K$-Lasso solution under the noisy linear model $y_i = \mu x + \sigma z_i$ with
$z_i \sim N(0,1)$, where $\mu = \mu(f)$ and $\sigma = \sigma(f)$ have simple expressions; the error of the $K$-Lasso is controlled by the local mean width of $K$.  We hope this theoretical result may aid researchers who use the $K$-Lasso in situations when the response may not be linear.  See \cite{chretien2015using} for one such implementation.  

We have made some idealized assumptions in this paper thus allowing theoretical results that are simple to state and understand.  There are many future directions of research both of theoretical and practical interest, particularly in softening assumptions, which we describe below.

We considered a Gaussian design matrix, $A$, and this allowed for a clean theoretical result.  It is of interest to determine whether these results have some universality properties.  Can the same kind of accuracy be expected for random non-Gaussian matrices?  Under the linear model, universality results have been shown in the compressed sensing literature \cite{donoho2009observed}, that is, theoretical performance based on a Gaussian matrix is shown to empirically match the performance for many other kinds of matrices.  However, there is an extra wrinkle under the single-index model: a universality result is impossible when $x$ is extremely sparse \cite{ALPV}.  When $A$ has independent sub-Gaussian entries, we conjecture that the results of our paper should still hold, although with an extra error term that becomes large when $x$ is very sparse, and shrinks towards zero if $x$ is spread out.  It is of interest to iron out this theory and also to determine, both theoretically and empirically, how far these results may extend towards general design matrices.

Another direction of interest is robustness of the $K$-Lasso to model inaccuracies.  
Will the $K$-Lasso solution remain accurate if the single-index model is only approximately true, or if $\mu x$ does not quite reside in $K$? 

Finally, these results lead to new opportunities in signal processing problems in which the scientist has some control over the non-linearity $f$, e.g., for quantization (see \cite{thrampoulidis2014simple}).  In that case, the explicit expressions for $\mu(f)$ and $\sigma(f)$ may be tuned to optimize the error.  It is of interest to identify other such problems, aside from quantization, that can benefit from this.

\bibliographystyle{plain}
\bibliography{pv-nonlinear-Lasso}
\end{document}